\theoremstyle{plain}
\newtheorem{conjecture}[theorem]{Conjecture}
\newcommand{\final}[1]{}
\newcommand{\arxiv}[1]{#1}
\title{Hardness of Bichromatic Closest Pair with Jaccard Similarity} 
\titlerunning{Hardness of Bichromatic Closest Pair with Jaccard Similarity}
\author{Rasmus Pagh}{BARC and IT University of Copenhagen, Denmark}{pagh@itu.dk}{https://orcid.org/0000-0002-1516-9306
}{This research has received funding from the European Research Council under the European Union’s 7th Framework Programme (FP7/2007-2013) / ERC grant agreement no.~614331.}
\author{Nina Mesing Stausholm}{BARC and IT University of Copenhagen, Denmark}{nimn@itu.dk}{https://orcid.org/0000-0002-4322-7163}{}
\author{Mikkel Thorup}{BARC and University of Copenhagen, Denmark}{mikkel2thorup@gmail.com}{https://orcid.org/0000-0001-5237-1709}{}
\authorrunning{R. Pagh, N.\,M. Stausholm  and M. Thorup}
\keywords{fine-grained complexity, set similarity search, bichromatic closest pair, jaccard similarity}
\begin{document}

\maketitle

\begin{abstract}
Consider collections $\mathcal{A}$ and $\mathcal{B}$ of red and blue sets,
respectively. Bichromatic Closest Pair is the problem of finding a
pair from $\mathcal{A}\times \mathcal{B}$ that has similarity higher than a given threshold
according to some similarity measure. Our focus here is the classic Jaccard similarity
$|\textbf{a}\cap \textbf{b}|/|\textbf{a}\cup \textbf{b}|$ for $(\textbf{a},\textbf{b})\in \mathcal{A}\times \mathcal{B}$.

We consider the approximate version of the problem where we are given thresholds $j_1>j_2$ and wish to return
a pair from $\mathcal{A}\times \mathcal{B}$ that has Jaccard similarity higher than
$j_2$ if there exists a pair in $\mathcal{A}\times \mathcal{B}$ with Jaccard similarity at
least $j_1$. The classic locality sensitive hashing (LSH) algorithm of Indyk and Motwani (STOC '98), instantiated with the MinHash LSH function of Broder et al., solves this problem in $\tilde O(n^{2-\delta})$ time if $j_1\ge j_2^{1-\delta}$. In particular, for $\delta=\Omega(1)$, the approximation ratio $j_1/j_2=1/j_2^{\delta}$ increases polynomially in $1/j_2$.

In this paper we give a corresponding hardness result. Assuming the
Orthogonal Vectors Conjecture (OVC), we show that there cannot be a
general solution that solves the Bichromatic Closest Pair problem in $O(n^{2-\Omega(1)})$ time
for $j_1/j_2=1/j_2^{o(1)}$. Specifically, assuming OVC, we prove that for
any $\delta>0$ there exists an $\varepsilon>0$ such that Bichromatic
Closest Pair with Jaccard similarity requires time
$\Omega(n^{2-\delta})$ for any choice of thresholds
$j_2<j_1<1-\delta$, that satisfy $j_1\le j_2^{1-\varepsilon}$.
\end{abstract}

\section{Introduction}
Twitter is a well-known social network, in which a user can connect to other users by \emph{following} them \cite{goel2013discovering}. Users can read and write messages called \emph{tweets} of up to 280 characters.
An important service that Twitter provides is helping users discover other users that they might like to follow, by making suggestions. This service is called the \emph{You might also want to follow}-service and is better known as the WTF (Who To Follow) recommender system \cite{gupta2013wtf}. In order to suggest connections that the user might like, they should be similar to the user's existing connections. As an example, if a user is already connected to Cristiano Ronaldo, Twitter might suggest Lionel Messi as a new connection, since the connection to Ronaldo hints that the user likes famous soccer players.
Hence, we need a way to decide if a connection is similar to an existing connection. We might for instance suggest a new connection if the tweets are similar to the tweets of an existing connection or if the connection has a lot of the same followers as an existing connection.

\medskip

The main challenge is to find similar connections when the number of user accounts increases drastically and the task is particularly difficult when the similarity does not need to be significant, i.e., when we look for connections that have only little in common with existing ones, while they may still be of interest to the particular user \cite{goel2013discovering}. This leads us to the notion of \emph{similarity search}, which concerns the general problem of searching for similar objects in a collections of objects. Often we consider these objects as sets representing some concept or entity. An object could for example be a document that is represented by a set of words. Hence, we talk about \emph{set similarity search}.

\medskip

There are several versions of the problem addressing different situations. In this paper we consider a batched version of set similarity search, namely the Bichromatic Closest Pair which can be informally described as follows:

Suppose we are given collections $\mathcal{A}$ and $\mathcal{B}$, each of $n$ sets from a universe of size $O(\log n)$. We refer to the sets in $\mathcal{A}$ as \emph{red} and the sets in $\mathcal{B}$ as \emph{blue}.
Bichromatic Closest Pair is the problem of finding the pair consisting of a red and a blue set that is closest with respect to some distance or similarity measure. We will concern ourselves with Jaccard similarity, which is defined for a pair of sets $(\textbf{a},\textbf{b})\in \mathcal{A}\times \mathcal{B}$ as
\begin{align}
\label{def:jaccardsim}
J(\textbf{a},\textbf{b})=\frac{\vert \textbf{a}\cap \textbf{b}\vert}{\vert \textbf{a}\cup \textbf{b}\vert}=\frac{\vert \textbf{a}\cap \textbf{b}\vert}{\vert \textbf{a}\vert+\vert \textbf{b}\vert-\vert \textbf{a}\cap \textbf{b}\vert}.
\end{align}
In particular, we consider the following \emph{decision version} of Bichromatic Closest Pair with Jaccard similarity: decide whether there exists a pair $(\mathbf{a},\mathbf{b})\in \mathcal{A}\times \mathcal{B}$ such that $J(\mathbf{a},\mathbf{b})\ge j_1$ or if all pairs $(\mathbf{a},\mathbf{b})\in \mathcal{A}\times \mathcal{B}$, has $J(\mathbf{a},\mathbf{b})< j_2$ for given thresholds $j_1$ and $j_2$.

\medskip
It is well-known that we can solve Bichromatic Closest Pair with Jaccard similarity for thresholds satisfying $j_1\ge j_2^{1-\delta}$ in time $O(n^{2-\delta})$ (see Section~\ref{sec:relatedwork}). 
In particular, for $\delta=\Omega(1)$, the
approximation ratio $j_1/j_2=1/j_2^{\delta}$ increases polynomially in $1/j_2$. 
In this paper, we will present a corresponding hardness
result. The hardness is conditioned on one of the most well-known
and widely believed hypotheses, namely the Orthogonal Vectors
Conjecture \cite{Williams18}.

\begin{conjecture}(Orthogonal Vectors Conjecture (OVC))
For every $\delta>0$ there exists $c=c(\delta)$ such that given two collections $\mathcal{A},\mathcal{B}\subset\{0,1\}^m$ of cardinality $n$, where $m=c\log n$, deciding if there is a pair $(\mathbf{a},\mathbf{b})\in \mathcal{A}\times \mathcal{B}$ such that $\mathbf{a}\cdot \mathbf{b}=0$ requires time $\Omega(n^{2-\delta})$.
\end{conjecture}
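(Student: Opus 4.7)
The final statement is the Orthogonal Vectors Conjecture (OVC) itself, which is an open hypothesis rather than a theorem with a known unconditional proof; no super-linear lower bound of this form is currently known for any explicit problem, and establishing one would constitute a major breakthrough well outside the scope of this paper. Any honest plan must therefore derive OVC from another widely believed hardness hypothesis. I would follow the classical Williams reduction showing that the Strong Exponential Time Hypothesis (SETH) implies OVC.

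First, I would invoke the Sparsification Lemma of Impagliazzo, Paturi, and Zane to reduce an arbitrary $k$-SAT instance on $n$ variables to a collection of at most $2^{\varepsilon n}$ instances, each having only $m = O(n)$ clauses, for any desired $\varepsilon>0$. This preprocessing is essential because the OVC setup demands dimension $c\log n$, and the dimension in the reduction will be exactly the number of clauses. Under SETH, for every $\delta>0$ there exists $k$ such that even this sparsified $k$-SAT takes time $2^{(1-\delta)n}$.

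Next, given a sparsified formula $\varphi$ with $n$ variables and $m = O(n)$ clauses, I would split the variables into halves $V_1,V_2$ of size $n/2$ and enumerate all $N=2^{n/2}$ partial assignments on each side. For a partial assignment $\sigma$ on $V_1$ I define $\mathbf{a}_\sigma\in\{0,1\}^m$ with $\mathbf{a}_\sigma[i]=0$ iff $\sigma$ already satisfies clause $C_i$, and symmetrically $\mathbf{b}_\tau$ for $\tau$ on $V_2$. A direct check shows that $(\sigma,\tau)$ satisfies $\varphi$ iff for each $i$ at least one of $\sigma,\tau$ satisfies $C_i$, iff $\mathbf{a}_\sigma\cdot\mathbf{b}_\tau=0$. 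Hence $\varphi$ is satisfiable iff the collections $\mathcal{A}=\{\mathbf{a}_\sigma\}$ and $\mathcal{B}=\{\mathbf{b}_\tau\}$ contain an orthogonal pair; here $|\mathcal{A}|=|\mathcal{B}|=N$ and the dimension is $m=O(n)=O(\log N)$, matching the OVC form with a suitable $c=c(\delta)$.

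Finally, if OV admitted an $O(N^{2-\delta'})$ algorithm in this regime, composing it with the sparsifier would decide $\varphi$ in time $2^{\varepsilon n}\cdot N^{2-\delta'}\cdot\mathrm{poly}(n) = 2^{(1-\delta'/2+\varepsilon)n+o(n)}$; choosing $\varepsilon<\delta'/4$ contradicts SETH for a suitable $\delta$. The main obstacle is of course not the reduction, which is clean, but the fundamental fact that OVC has no unconditional proof: establishing one would demand lower-bound techniques far beyond anything currently available. For this reason OVC is adopted as a standing hypothesis throughout the paper, and the paper's contribution lies in reducing OV to Bichromatic Closest Pair with Jaccard similarity rather than in justifying OVC itself.
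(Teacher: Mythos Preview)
Your assessment is correct: the statement is a conjecture, not a theorem, and the paper offers no proof of it whatsoever---it is simply stated as a standing hypothesis (with a citation to \cite{Williams18}) and then used as the assumption underlying Theorem~\ref{thm:mainthm}. Your additional discussion of the Williams SETH-to-OV reduction is accurate and standard background, but it goes beyond what the paper itself provides; the paper neither derives OVC from SETH nor attempts any justification of OVC, so in that sense your write-up is strictly more informative than the paper's own treatment of this statement.
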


Assuming OVC, we show that there cannot be a general
solution that solves the Bichromatic Closest Pair problem with Jaccard
similarity in $O(n^{2-\Omega(1)})$ time for
$j_1/j_2=1/j_2^{o(1)}$. More specifically, we show

\begin{theorem}
\label{thm:mainthm}
Assuming the Orthogonal Vectors Conjecture (OVC), the following holds: for any $\delta>0$, there exists an $\varepsilon>0$ such that for any given $j_2<j_1<1-\delta$ satisfying $j_1\le j_2^{1-\varepsilon}$, solving Bichromatic Closest Pair with Jaccard similarity for $n$ red and $n$ blue sets for sets from a universe of size
$\ln(n) / j_2^{O(\log 1/j_1)}$
for thresholds $j_1$ and~$j_2$ requires time  $\Omega(n^{2-\delta})$.
\end{theorem}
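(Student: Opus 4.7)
I plan to prove the theorem by reducing Orthogonal Vectors to Bichromatic Closest Pair with Jaccard similarity. Given an OV instance $\mathcal{A},\mathcal{B}\subseteq \{0,1\}^m$ with $m = c(\delta)\log n$, I will map each red vector $\mathbf{a}\in\mathcal{A}$ to a set $\varphi_{\mathrm{red}}(\mathbf{a})$ and each blue vector $\mathbf{b}\in\mathcal{B}$ to a set $\varphi_{\mathrm{blue}}(\mathbf{b})$, both over the same universe of size $\ln(n)/j_2^{O(\log 1/j_1)}$, in such a way that an orthogonal pair in the OV instance corresponds to a pair of sets with Jaccard at least $j_1$, while no orthogonal pair forces all red--blue Jaccards below $j_2$. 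An $O(n^{2-\delta})$ algorithm for Bichromatic Closest Pair would then solve OV in $O(n^{2-\delta})$, contradicting OVC.

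The main structural obstacle is that any \emph{symmetric} encoding (the same map applied to red and blue) produces a Jaccard similarity that is monotone non-decreasing in the inner product $\mathbf{a}\cdot\mathbf{b}$, so orthogonal pairs---our target---would have the \emph{smallest} Jaccard, the opposite of what BCP searches for. To break this symmetry I would use an asymmetric encoding where red sets track the \emph{ones} of the vector while blue sets track the \emph{zeros}. Concretely, pick random blocks $G_1,\ldots, G_m \subseteq [N]$ (each element independently included with probability $p$), set $\varphi_{\mathrm{red}}(\mathbf{a}) = \bigcup_{i : a_i = 1} G_i$, and $\varphi_{\mathrm{blue}}(\mathbf{b}) = [N] \setminus \bigcup_{j : b_j = 1} G_j$. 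An inclusion--exclusion computation using the independence of the $G_i$'s shows that the expected Jaccard is strictly decreasing in $\mathbf{a}\cdot\mathbf{b}$ (for fixed Hamming weights), so orthogonal pairs become the high-Jaccard pairs.

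Given this direction-correcting gadget, the remaining task is parameter tuning. I would preprocess the OV instance so all vectors have the same Hamming weight $w$ via appending dummy coordinates. The expected Jaccard then has a closed form in terms of $(1-p)^w$ and $(1-p)^{w-t}$, where $t = \mathbf{a}\cdot\mathbf{b}$, and for a suitable choice of $(p,w)$ the orthogonal-pair Jaccard sits at the target value $j_1$ with a strict gap to the $t=1$ value $j_2$. To meet the theorem's bound $j_1 \leq j_2^{1-\varepsilon}$ for a \emph{constant} $\varepsilon>0$, I would apply a further AND-amplification (for instance a Cartesian product of $k=\Theta(\log 1/j_1)$ independent copies of the gadget), which drives $j_1$ and $j_2$ down to their target values while preserving the crucial ratio $\log(1/j_1)/\log(1/j_2)\leq 1-\varepsilon$. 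The universe blowup is exactly $1/j_2^{O(\log 1/j_1)}$, matching the bound in the theorem statement.

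The main obstacles are the concentration analysis and achieving a \emph{constant} $\varepsilon$ when the OV vectors have full weight $\Theta(\log n)$. Concentration is handled by choosing the base universe to have size $\Theta(\ln n)$ and applying Chernoff bounds with a union bound over the $n^2$ pairs, which accounts for the $\ln n$ factor in the final universe size. Achieving a constant $\varepsilon$ is the trickier part: a naive single-shot analysis of the gadget gives a gap shrinking like $1/\log n$, and the argument must carefully exploit the amplification step together with a balanced choice of $p$ and $w$ to recover a constant asymptotic relative gap in $\log(1/j_1)/\log(1/j_2)$. A final derandomization (or invoking OVC directly, since its statement is robust to randomized reductions with high success probability) then concludes the proof.
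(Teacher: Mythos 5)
Your proposal reduces directly from OV to Jaccard via an asymmetric random-block gadget, and you correctly flag the resulting gap between the orthogonal case ($t=0$) and the $t=1$ case as only $1 + O(1/\log n)$ multiplicatively, since OV vectors have weight $\Theta(\log n)$ and you are distinguishing a single unit of inner product. The fatal problem is your claim that the Cartesian-product (AND) amplification can recover a \emph{constant} $\varepsilon$ from this $o(1)$ slack. It cannot. With a balanced gadget ($q \approx 1/2$, $p = \Theta(1/\log n)$), a direct computation gives, after $k$-fold tensoring, $\log(1/j_1) \approx (k+1)\log 2$ and $\log(1/j_2) \approx (k+1)\log 2 + k\,p$, so
\[
\frac{\log(1/j_1)}{\log(1/j_2)} \;\approx\; 1 - \frac{k\,p}{(k+1)\log 2} \;\xrightarrow{k\to\infty}\; 1 - \Theta(p) \;=\; 1 - O\!\left(\frac{1}{\log n}\right),
\]
independent of $k$. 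Tensoring scales both $\log(1/j_1)$ and $\log(1/j_2)$ by essentially the same factor, so it \emph{preserves} the ratio; it cannot turn a $1/\log n$ relative gap into a constant one. The only way to push $p$ up to a constant is to drive the blue-set density $q=(1-p)^w$ down to $n^{-\Omega(1)}$, which is incompatible with a universe of size $\mathrm{poly}(\log n)$ as required.

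This is precisely the obstacle the paper sidesteps by \emph{not} reducing directly from OV. Instead it invokes Rubinstein's Theorem~4.1 (Lemma~\ref{lem:rubinstances} here), which already establishes hardness of a $(1+\varepsilon)$-approximation to Bichromatic Closest Pair under Hamming distance for a \emph{constant} $\varepsilon=\varepsilon(\delta)$, using a distributed-PCP / algebraic-geometry-code construction that amplifies the $1/\log n$ OV gap to a constant before any set encoding is done. Starting from that constant-gap Hamming instance, the paper translates to Jaccard (Lemma~\ref{lem:basecase}), pushes the thresholds up near $1-\delta$ by adding common elements (Lemma~\ref{lem:deltamapping}), and only then applies the squaring-and-sampling tensoring (Lemmas~\ref{lem:squaringmapping},~\ref{lem:sumup}), which preserves the already-constant $\log$-ratio as it drives $j_1,j_2$ down to the given targets. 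Your reduction is missing this gap-amplification ingredient, and no elementary gadget plus tensoring can supply it; you would effectively be re-deriving Rubinstein's theorem from scratch, which your proof sketch does not do.
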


The dependence of $\varepsilon$ on $\delta$ is unspecified because the function $c(\delta)$ in OVC is not specified, see discussion in \arxiv{Appendix~\ref{app:computations}.}\final{Appendix~B in the full version on ArXiv \cite[App.~B]{arxiveversion}.}

\subsection{Techniques and Related Work}
\label{sec:relatedwork}
Similarity search can be performed in several ways -- a popular technique is Locality Sensitive Hashing (LSH) \cite{indyk1998approximate} which attempts to collect similar items in buckets in order to reduce the number of sets needed to check similarity against.
We can for example use Broder's MinHash \cite{broder1997resemblance} with locality sensitive hashing to solve Bichromatic Closest Pair with Jaccard similarity in time $\tilde{O}(n^{2-\varepsilon})$ when $j_1\ge j_2^{1-\varepsilon}$ for any $\varepsilon$. This is done by ensuring that the collision probability for pairs with similarity $j_2$ is $1/n$ and the collision probability for pairs with similarity $j_1$ is $1/n^{1-\varepsilon}$. Hashing $n^{1-\varepsilon}$ times means that we find a pair with similarity $j_1$ if one exists.
The ChosenPath method presented in \cite{christiani2017set} also uses the LSH framework to solve Bichromatic Closest Pair with Braun-Blanquet similarity in time $\tilde{O}(n^{2-\varepsilon})$ for thresholds $j_1\ge j_2^{1-\varepsilon}$.

\medskip
The proof of Theorem \ref{thm:mainthm} will be based on a result by Rubinstein~\cite{rubinstein2018hardness}: Assuming the Orthogonal Vectors Conjecture, a $(1+\varepsilon)$-approximation to Bichromatic Closest Pair with Hamming, Edit or Euclidean distance requires time $\Omega(n^{2-\delta})$. 
The required approximation factor $1+\varepsilon$ depends on~$\delta$, and tends to 1 as $\delta$ tends to zero. 
We translate this into an equivalent conditional lower bound for Jaccard similarity for certain constants
$j_1$ and $j_2$.

In order to handle smaller subconstant values of $j_1$ and $j_2$ we use a technique that we call squaring, which allows us to increase the gap in similarities between pairs with high Jaccard similarity and pairs with low Jaccard similarity by computing the cartesian product of a binary vector with itself. A similar technique is used in \cite{valiant2015finding} by Valiant. His technique is called \emph{tensoring} and is used to amplify the gap between small and large inner products of vectors. 
We also see a similar technique in the LSH framework with MinHash, where we use concatenation of hash values (which are sampled set elements) to amplify the difference in collision probability, and hence in the Jaccard similarity. 

\medskip
 Combining two simple reductions with the above squaring we show that for any $\delta$, we can always find $\varepsilon$ such that Bichromatic Closest Pair with Jaccard similarity cannot be solved in time $O(n^{2-\delta})$ for any pair $j_1,j_2<1-\delta$ when $j_1\le j_2^{1-\varepsilon}$.
Contrast this with the above LSH upper bound of $\tilde{O}\left(n^{2-\delta}\right)$ for $j_1\ge j_2^{1-\delta}$. We also know that there are parts of the parameter space where $j_1=j_2^{1-\delta}$ that can be solved in $\tilde{O}\left(n^{2-\delta-\Omega(1)}\right)$ time, see the discussion in \cite{christiani2017set}.
While LSH with MinHash is not the fastest possible algorithm in terms of the exponent achieved,  it has been unclear how far from optimal it might be.

\paragraph*{Other related work}
Very recently, Chen and Williams \cite{chen2018equivalence} showed that assuming the OVC we cannot additively approximate our Bichromatic Closest Pair problem with Jaccard similarity.
It might be possible to use Chen and Williams as a base for showing our main theorem, but this would require reductions quite different from the ones presented in this paper.

An earlier of result of Chen \cite{chen2018hardness} shows that it is not possible (under OVC) to compute a $\left(d/\log n\right)^{o(1)}$-approximation to Maximum Inner Product (Max-IP) with two sets of $n$ vectors from $\{0,1\}^d$ in time~$O(n^{2-\Omega(1)})$.

\section{Preliminaries}

\subsection{Notation}
\label{sec:notation}
We will occasionally consider a set, $\mathbf{x}$, from a finite universe
$U=\{u_1,...,u_{\vert U\vert}\}$ as a vector $\mathbf{v}$ of dimension $\vert U\vert$ such that $v_i=[u_i\in x]$, in Iverson notation. We call this vector the \emph{characteristic vector for $\mathbf{x}$}. Hence, we refer to the set of indexes and the universe interchangeably. We denote the Hamming weight of a binary vector $\mathbf{v}$ by $\vert \mathbf{v}\vert$.
In the following, we will not only index vectors with integers, but also with vectors of integers. Hence, we will consider vectors of dimension $d^2$ with entries $v_{ij}$, for $i=(i_1,...,i_d)$ and $j=(j_1,...,j_d)$.
 
\subsection{Bichromatic Closest Pair}
Recall Jaccard similarity as is defined in (\ref{def:jaccardsim}). 
We define Bichromatic Closest Pair with Jaccard similarity for thresholds $t_1$ and $t_2$ as follows: Let $U$ be a universe of size $O(\log n)$. Given collections $\mathcal{A}$ and $\mathcal{B}$, each of $n$ sets from $U$, and thresholds $t_2<t_1<1$, we will consider the problem of finding a pair of sets $(\mathbf{a},\mathbf{b})\in \mathcal{A}\times \mathcal{B}$ with $J(\mathbf{a},\mathbf{b})\ge t_2$ if there exists a pair $(\mathbf{a}^*,\mathbf{b}^*)\in \mathcal{A}\times \mathcal{B}$ with $J(\mathbf{a}^*,\mathbf{b}^*)\ge t_1$. If all pairs have $J(\mathbf{a},\mathbf{b})< t_2$, we must not return any pair of sets.

\subsection{Useful instances of Bichromatic Closest Pair}
\label{sec:baseinstances}
The following lemma corresponds to Theorem 4.1 in \cite{rubinstein2018hardness} and will form the basis of our results.
It includes the important properties of the instances constructed in the proof the theorem, which we will use actively to prove our own Theorem \ref{thm:mainthm}.

\begin{lemma}
\label{lem:rubinstances}
Assume OVC. Given $\delta>0$, there exist $\varepsilon>0$ and values $h_1,h_2$ where $h_2=(1+\varepsilon)h_1$ such that Bichromatic Closest Pair with Hamming distance for thresholds $h_1$ and $h_2$ requires time $\Omega(n^{2-\delta})$ for instances with $n$ red and $n$ blue sets from a universe of size $O(\log n)$. There are instances that require this time with the following properties, where we let $T=O\left(\frac{1}{\varepsilon}\right)$ and $m=O(\log n)$:
\begin{itemize}
    \item All red sets have size $Tm$ and all blue sets have size $m$.
    \item The thresholds $h_1$ and $h_2$ are $m(T-1)$ and $mT$, respectively.
    \item All sets in the instance come from a universe of size $2Tm$.
\end{itemize}
\end{lemma}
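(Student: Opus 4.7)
The plan is to invoke Rubinstein's Theorem 4.1 essentially as a black box for the hardness itself, and then verify that the instances produced by his reduction can be normalized to have the three structural properties listed in the lemma. Rubinstein reduces an OVC instance on $n$ vectors of dimension $c(\delta)\log n$ to a Bichromatic Closest Pair instance with Hamming distance via a distributed PCP / MA-style protocol, giving a multiplicative gap $1+\varepsilon$ between the minimum Hamming distance in the YES-case and in the NO-case, where $\varepsilon$ depends only on $\delta$ through the soundness of the underlying protocol.

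Once the raw hardness is in hand, the work is to read off the stated bullet points from the construction. In Rubinstein's reduction each blue set encodes a short certificate of size $m=O(\log n)$, while each red set acts as a ``checker'' composed of $T=O(1/\varepsilon)$ checking blocks, giving red sets of size $Tm$ and blue sets of size $m$ by design. By allocating a disjoint block of $m$ fresh universe elements for each of the $T$ checker positions (and a further $m$ elements per block for dummy coordinates used to enforce uniform weights), one obtains a shared universe of size $2Tm$ in which all sets achieve exactly the claimed cardinalities. Because $|\mathbf{a}|+|\mathbf{b}|=(T+1)m$ is then fixed across all red-blue pairs, Hamming distance is the affine function $(T+1)m-2|\mathbf{a}\cap\mathbf{b}|$ of the intersection size, so intersection-size gaps translate directly into distance thresholds: intersections of $m$ versus $m/2$ give exactly $h_1=m(T-1)$ and $h_2=mT$.

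The main obstacle is aligning Rubinstein's soundness parameter with the specific gap $h_2/h_1=T/(T-1)$ demanded by the lemma. Writing $\varepsilon=1/(T-1)$ yields $T=1+1/\varepsilon=O(1/\varepsilon)$, so this amounts to choosing the number of checking blocks $T$ as a function of the $\varepsilon$ supplied by Rubinstein's theorem, and then absorbing any residual slack by padding. I would also need to verify that the block-wise padding does not create unintended intersections between red and blue sets coming from different blocks; this is easily arranged by making each of the $T$ blocks of $2m$ universe elements self-contained, so the red-blue intersection decomposes as a sum of $T$ block-local intersections, each of which is controlled by the analysis of the underlying OV reduction.
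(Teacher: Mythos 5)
The paper does not actually supply a proof of this lemma: it cites Rubinstein's Theorem~4.1 and simply \emph{asserts} that the instances constructed in his proof have the three listed structural properties, leaving the reader to verify this against his paper. Your proposal takes the same route --- invoke Rubinstein as a black box for the $\Omega(n^{2-\delta})$ lower bound and then read the set sizes, thresholds, and universe size off his construction --- so in terms of approach you are aligned with the paper. Your arithmetic is also correct: with $|\mathbf{a}| = Tm$ and $|\mathbf{b}| = m$ fixed, $d_H(\mathbf{a},\mathbf{b}) = (T+1)m - 2|\mathbf{a}\cap\mathbf{b}|$, and intersection sizes $m$ versus $m/2$ give exactly $h_1 = m(T-1)$ and $h_2 = mT$, with ratio $T/(T-1) = 1 + 1/(T-1)$, consistent with $T = O(1/\varepsilon)$.

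That said, the middle portion of your proposal --- blue sets encoding certificates, red sets as $T$-block checkers, $m$ dummy coordinates per block for weight normalization, block-local intersection decomposition --- is a speculative reconstruction of Rubinstein's reduction rather than a verification of it, and you acknowledge as much when you write ``I would also need to verify that\ldots''. In particular the roles you assign (blue encodes a certificate; red enumerates checking blocks) are not quite how the distributed-PCP reduction is usually described (Bob's side encodes his input and Alice's side enumerates over Merlin's advice), and whether the dummy-padding scheme preserves the claimed exact intersection values is exactly the kind of detail that needs to be checked against the source. Since the paper itself punts on this verification, your proposal is at the same level of rigor as the paper's treatment; to turn it into a genuine proof of Lemma~\ref{lem:rubinstances} one would need to open Rubinstein's construction and confirm these three bullets rather than reverse-engineer them from the numbers.
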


In particular, the lemma states that we cannot compute a $(1+\varepsilon)$-approximation to Bichromatic Closest Pair with Hamming distance in truly subquadratic time. We will extend this result in a few steps, using the properties of the hard instances, to achieve Theorem \ref{thm:mainthm}. 
\subsection{Hardness of Bichromatic Closest Pair with Jaccard similarity}
In order to prove Theorem \ref{thm:mainthm}, we need the following lemma, which extends Lemma \ref{lem:rubinstances} in the natural way to Jaccard similarity.
\begin{lemma}
\label{lem:basecase}
Assuming OVC, we have the following: For any $\delta>0$ there exist $j_1,j_2$ with $j_1=2\cdot j_2$ such that Bichromatic Closest Pair with Jaccard similarity with thresholds $j_1$ and $j_2$ requires time $\Omega(n^{2-\delta})$.
\end{lemma}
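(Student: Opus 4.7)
The plan is to apply Lemma~\ref{lem:rubinstances} and reinterpret the very same hard Hamming instance as a Jaccard instance, without modifying the sets at all.

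Given $\delta>0$, fix $T$ and $m$ as in Lemma~\ref{lem:rubinstances}, so that it is $\Omega(n^{2-\delta})$-hard to distinguish the YES case (some red-blue pair has Hamming distance at most $h_1=(T-1)m$) from the NO case (every red-blue pair has Hamming distance strictly more than $h_2=Tm$). The key observation is that because every red set has size $Tm$ and every blue set has size $m$, the Jaccard similarity of a pair $(\mathbf{a},\mathbf{b})$ depends only on its Hamming distance $h=|\mathbf{a}|+|\mathbf{b}|-2|\mathbf{a}\cap\mathbf{b}|$:
$$J(h) \;=\; \frac{|\mathbf{a}\cap\mathbf{b}|}{|\mathbf{a}\cup\mathbf{b}|} \;=\; \frac{((T+1)m - h)/2}{((T+1)m + h)/2} \;=\; \frac{(T+1)m - h}{(T+1)m + h}.$$
This is strictly decreasing in $h$, and in particular $J(h_1) = 2m/(2Tm) = 1/T$ and $J(h_2) = m/((2T+1)m) = 1/(2T+1)$.

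I would then set the Jaccard thresholds to $j_1 = 1/T$ and $j_2 = 1/(2T)$, which by construction satisfy $j_1 = 2\cdot j_2$ and $1/(2T+1) < j_2 < j_1$. With this choice, the same instance is a valid promise instance for Jaccard: if some pair has $H \le h_1$, that pair already has $J \ge J(h_1) = j_1$, so the instance is Jaccard-YES; conversely, if every pair has $H > h_2$, then by strict monotonicity every pair has $J < J(h_2) = 1/(2T+1) < j_2$, so the instance is Jaccard-NO. Consequently, any algorithm solving Bichromatic Closest Pair with Jaccard similarity for these thresholds in time $O(n^{2-\delta})$ would solve the Hamming problem of Lemma~\ref{lem:rubinstances} in the same time, contradicting OVC.

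The only potential obstacle is arithmetic: Rubinstein's Hamming gap translates to Jaccard values whose ratio is $(2T+1)/T = 2 + 1/T$, strictly larger than the target factor of $2$. This is neatly absorbed by loosening the NO-threshold from the tight value $1/(2T+1)$ to $1/(2T)$; the resulting slack $[1/(2T+1),\,1/(2T)]$ in which to place $j_2$ is nonempty, and picking $j_2 = 1/(2T)$ produces the exact ratio $j_1/j_2 = 2$ required by the lemma while still lying safely above the largest possible Jaccard similarity in a NO instance.
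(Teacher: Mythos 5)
Your proof is correct and follows essentially the same approach as the paper: re-read the hard Hamming instances of Lemma~\ref{lem:rubinstances} as Jaccard instances using the identity $J=\frac{|\mathbf{a}|+|\mathbf{b}|-d_H}{|\mathbf{a}|+|\mathbf{b}|+d_H}$, plugging in the fixed set sizes $Tm$ and $m$ to get $J(h_1)=1/T$ and $J(h_2)=1/(2T+1)$. Your treatment is in fact slightly more careful than the paper's own: the paper stops at the thresholds $1/T$ and $1/(2T+1)$ (ratio $2+1/T$) and simply asserts that a $2$-approximation is hard, whereas you explicitly loosen the lower threshold to $j_2=1/(2T)$ so that $j_1=2j_2$ holds exactly while the NO-instances still fall strictly below $j_2$ --- which is precisely the step needed to match the lemma's statement literally.
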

\begin{proof}
We use instances as described in Lemma \ref{lem:rubinstances}. First, note that
\[
J(\mathbf{a},\mathbf{b})=\frac{\vert \mathbf{a}\cap \mathbf{b}\vert}{\vert \mathbf{a}\cup \mathbf{b}\vert}=\frac{\frac{\vert \mathbf{a}\vert +\vert \mathbf{b}\vert-d_H(\mathbf{a},\mathbf{b})}{2}}{\vert \mathbf{a}\vert+\vert \mathbf{b}\vert-\frac{\vert \mathbf{a}\vert +\vert \mathbf{b}\vert-d_H(\mathbf{a},\mathbf{b})}{2}}=\frac{\vert \mathbf{a}\vert +\vert \mathbf{b}\vert-d_H(\mathbf{a},\mathbf{b})}{\vert \mathbf{a}\vert +\vert \mathbf{b}\vert+d_H(\mathbf{a},\mathbf{b})}
\]
which implies that letting
\begin{align*}
    j_1&=\frac{Tm +m-m(T-1)}{Tm +m+m(T-1)}=\frac{1}{T}\qquad \text{and}\qquad
    j_2=\frac{Tm +m-Tm}{Tm +m+Tm}=\frac{1}{2T+1},
\end{align*}
we cannot solve Bichromatic Closest Pair with Jaccard similarity in time $O(n^{2-\delta})$.
Since $T=O\left(\frac{1}{\varepsilon}\right)$, as mentioned in Lemma \ref{lem:rubinstances}, we get a lower bound for the approximation factor:
\[
\frac{\frac{1}{T}}{\frac{1}{2T+1}}=\frac{2T+1}{T}=2+\frac{1}{T}=2+\Omega(\varepsilon).
\]
In particular, we achieve hardness of a 2-approximation. 
\end{proof}

\section{Overview of reductions used}
\label{sec:reductions}
We prove Theorem \ref{thm:mainthm} by combining several reductions into one. So let $(\mathcal{A},\mathcal{B})$ be any instance of Bichromatic Closest Pair with Jaccard similarity as described in Lemma \ref{lem:rubinstances}.
We give a brief introduction to each of these reductions -- note that all reductions are self-reductions. We give the details of the proof and the use of each reduction in Section~\ref{sec:result}. \arxiv{Further details can be found in Appendix \ref{app:computations}.}\final{Further details can be found in Appendix B in the full version on ArXiv \cite[App.~B]{arxiveversion}.}

\begin{itemize}
    \item \textbf{Adding common elements to sets:} Adding common elements to all sets in collections $\mathcal{A}$ and $\mathcal{B}$ increases the Jaccard similarity between any pair of red and blue sets.
    \item \textbf{Adding different elements to sets:} Adding elements to all sets in $\mathcal{A}$ decreases the Jaccard similarity between any pair of red and blue sets.
    \item \textbf{Squaring:} Consider all sets by their characteristic vector. We define squaring as follows: given vector $\mathbf{a}=(a_1,...,a_d)$
    the squared vector has entries
    \[
    a'_{ij}=a_i\cdot a_j\qquad \text{for $i,j\in\{1,...,d\}$}.
    \]
    The resulting vector $\mathbf{a}'$, which is the characteristic vector for $\mathbf{a}\times \mathbf{a}$, has dimension $d^2$ as described in Section~\ref{sec:notation}. Vector $\mathbf{a}'$ can equivalently be considered as a set from a universe of size $d^2$. We will use this reduction iteratively to reduce the Jaccard similarity between any pair of vectors in the instance of Bichromatic Closest Pair.
    \item \textbf{Sampling:} We will use sampling to reduce the size of the universe after each step of squaring. Hence, we consider squaring and sampling as a single reduction which first squares the vectors and then samples from the resulting vectors. We will use the squaring-and-sampling reduction iteratively.
\end{itemize}

\section{The squaring-and-sampling reduction -- details}
\label{sec:squaringdetails}
In the proof of Theorem \ref{thm:mainthm} we will take any instance of Bichromatic Closest Pair with Jaccard similarity with the properties described in Lemma \ref{lem:rubinstances} and use the squaring reduction described in Section~\ref{sec:reductions} to decrease the Jaccard similarity of every pair of sets in the instance. We will argue that a solution for the new instance also provides a solution for the original instance.
When squaring all sets, the Jaccard similarity between any pair of sets will decrease, so we need to capture this change in the thresholds, such that a solution for the new instance implies a solution for the initial instance. When squaring the sets in $\mathcal{A}$ and $\mathcal{B}$, the size of the sets will be squared and it is easy to see that so will the size of the intersection. Hence, the Jaccard similarity of a pair $(\mathbf{a},\mathbf{b})$ after squaring $i$ times, $\left(\mathbf{a}_{i},\mathbf{b}_{i}\right)$ is 
\begin{align}\label{JaccardSquaredNoSampling}
J\left(\mathbf{a}_i,\mathbf{b}_i\right)=\frac{\vert \mathbf{a}\cap \mathbf{b}\vert^{2^i}}{\vert \mathbf{a}\vert^{2^i}+\vert \mathbf{b}\vert^{2^i}-\vert \mathbf{a}\cap \mathbf{b}\vert^{2^i}}.
\end{align}
In order to keep down the size of the universe, we need to sample after each step of squaring. This might incur a small error in the Jaccard similarity. The next few sections will bound this error.
From this point, we will denote the squaring-and-sampling reduction by $f$. Hence, applying the reduction $f$ to a set, $\mathbf{v}$, $i$ times will yield a set $\mathbf{f(v,i)}$.

\subsection{Subsampling}
We bound the error incurred in each of $\vert \mathbf{a}\cap \mathbf{b}\vert$, $\vert \mathbf{a}\vert$ and $\vert \mathbf{b}\vert$ and combine these with a union bound to get a bound on the error in the Jaccard similarity. We shall see that when sampling sufficiently many elements from the universe the sets are taken from, we get that with high probability a solution for the constructed instance will provide a valid solution for the original instance.

\medskip
The following lemmas will help us show that sampling after squaring will not distort the similarity of the resulting vectors too much. 

\begin{lemma}
\label{lem:vectorsizeNEW}
Let $0<m'<m<1$ and let $\mathbf{p}$ be a set from a universe of size $s^2$ for an integer $s$. Assume that $\left(m'\cdot s\right)^2\le \vert \mathbf{p}\vert\le \left(m\cdot s\right)^2$. Sample $s'$ elements from the universe uniformly at random, $\mathbf{z}$, thus generating sample set $\mathbf{p}\cap \mathbf{z}$. 
We have 
\[
(1-\gamma)\cdot m'^2\cdot s'\le \vert \mathbf{p}\cap \mathbf{z} \vert \le (1+\gamma)\cdot m^2\cdot s'
\]
with probability at least $1-2n^{-10}$ when sampling $s'\ge\frac{30\ln(n)}{\gamma^2m'^2}$ elements.
\end{lemma}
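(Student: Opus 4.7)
The plan is to reduce the statement to a direct application of the multiplicative Chernoff bound. First I would observe that each of the $s'$ sampled indices lies in $\mathbf{p}$ with the same probability $\mu_0 := |\mathbf{p}|/s^2$, and by the hypothesis on $|\mathbf{p}|$ we have $m'^2 \le \mu_0 \le m^2$. Letting $X := |\mathbf{p} \cap \mathbf{z}|$, we thus get $\mu := \mathbb{E}[X] = s'\mu_0 \in [m'^2 s',\, m^2 s']$. Whether $\mathbf{z}$ is sampled with or without replacement, $X$ is a sum of (possibly negatively correlated) $\{0,1\}$-valued indicators, so the standard multiplicative Chernoff-Hoeffding inequalities apply.

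Next I would handle the two tails separately, using monotonicity to convert the fixed-constant thresholds of the lemma into deviations from $\mu$. For the upper tail, since $\mu \le m^2 s'$,
\[
\Pr\bigl[X \ge (1+\gamma)\, m^2 s'\bigr] \;\le\; \Pr\bigl[X \ge (1+\gamma)\mu\bigr] \;\le\; \exp\!\left(-\gamma^2 \mu /3\right),
\]
and for the lower tail, since $\mu \ge m'^2 s'$,
\[
\Pr\bigl[X \le (1-\gamma)\, m'^2 s'\bigr] \;\le\; \Pr\bigl[X \le (1-\gamma)\mu\bigr] \;\le\; \exp\!\left(-\gamma^2 \mu /2\right).
\]
Plugging in $\mu \ge m'^2 s' \ge 30 \ln(n)/\gamma^2$, the first exponent is at least $10 \ln n$ and the second at least $15 \ln n$, so each tail probability is at most $n^{-10}$. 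A union bound yields total failure probability at most $2n^{-10}$, proving the lemma.

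There is essentially no deep obstacle here; the only mild subtlety is the direction-of-inequality issue, i.e., that the thresholds $(1\pm\gamma)m'^2 s'$ and $(1\pm\gamma)m^2 s'$ are not centered at the true mean $\mu$. This is precisely why the statement uses the pessimistic $m'^2$ in both the $s'$ bound and the lower endpoint, and $m^2$ in the upper endpoint: it ensures that each side of the Chernoff inequality dominates the corresponding worst case over $\mu \in [m'^2 s', m^2 s']$. If sampling is done without replacement (so that $X$ is hypergeometric), I would invoke the classical result that the same multiplicative Chernoff bounds hold in that setting (e.g., via Hoeffding's reduction to the binomial case), so the argument goes through unchanged.
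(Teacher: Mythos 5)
Your proof is correct and follows essentially the same route as the paper: a two-sided multiplicative Chernoff bound for $|\mathbf{p}\cap\mathbf{z}|$ combined with a union bound over the two tails. You spell out more carefully than the paper the monotonicity step (reducing each threshold to a deviation from the true mean $\mu$) and the without-replacement remark, but the core argument is the same.
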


\begin{proof}
The result is an immediate consequence of the Chernoff bound: when we sample $s'\ge~ \frac{20\ln(n)}{\gamma^2m'^2}$ elements, we have with probability at least $1-n^{-10}$ that $(1-\gamma)\left(m'\cdot s\right)^2~\cdot~\frac{s'}{s^2}\le~ \vert\mathbf{p}~\cap~\mathbf{z}\vert$. A similar result gives the upper bound on $\vert \mathbf{p}\cap \mathbf{z}\vert$ for $s'\ge \frac{30\ln(n)}{\gamma^2m^2}$. As $m'\le m$, we maximize $s'$ by  $\frac{30\ln(n)}{\gamma^2m'^2}$ and thus ensure both bounds with probability at least $1-2n^{-10}$ using a union bound.
\end{proof}

We are generally going to use $\gamma$ as the same fixed parameter (to be
determined later) every time we invoke the sampling of Lemma \ref{lem:vectorsizeNEW}.
\medskip

We will use Lemma \ref{lem:vectorsizeNEW} to show that sampling after squaring will not distort the Jaccard similarity of a pair of vectors too much, and hence we get the benefits of squaring without the exploding vector dimensions. We start by bounding the resulting sizes for each of $\vert \mathbf{a}\vert,\vert \mathbf{b}\vert$ and $\vert \mathbf{a}\cap \mathbf{b}\vert$ for any choice of $\mathbf{a},\mathbf{b}\in \mathcal{A}\times \mathcal{B}$ from squaring and sampling $i$ times.

\begin{lemma}
\label{lem:squaringsampling}
Let $\mathbf{v}$ be a set from a universe of size $d$ or the intersection of such two sets. Let $\mathbf{f(v,i)}$ denote the resulting set after running $i$ iterations of the squaring-and-sampling reduction on set $\mathbf{v}$ for $i\ge 1$. We have
\[
(1-\gamma)^{2^i}\frac{\vert \mathbf{v}\vert^{2^i}}{d^{2^i}}s_i\le \vert \mathbf{f(v,i)}\vert\le (1+\gamma)^{2^i}\frac{\vert \mathbf{v}\vert^{2^i}}{d^{2^i}}s_i
\]
with probability at least $1-2in^{-10}$
where $s_i\ge \frac{30\ln(n)d^{2^i}}{\gamma^2(1-\gamma)^{2^i-2}\vert \mathbf{v}\vert^{2^i}}$.
\end{lemma}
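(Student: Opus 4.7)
The natural approach is induction on $i$, applying Lemma~\ref{lem:vectorsizeNEW} once per iteration and combining failure probabilities by a union bound. Write $\rho = \vert\mathbf{v}\vert/d$ for the initial density of $\mathbf{v}$ in its universe, and observe that squaring is exact: a set of cardinality $k$ in a universe of size $s$ squares to a set of cardinality $k^2$ in a universe of size $s^2$, so the density is squared exactly and only the sampling step introduces a multiplicative error of the form $(1\pm\gamma)$.

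The inductive hypothesis I would actually carry is the \emph{strengthened} bound $(1-\gamma)^{2^{i-1}-1}\rho^{2^{i-1}} s_{i-1}\le \vert\mathbf{f(v,i-1)}\vert\le (1+\gamma)^{2^{i-1}-1}\rho^{2^{i-1}} s_{i-1}$, holding with probability at least $1-2(i-1)n^{-10}$. The base case $i=1$ is Lemma~\ref{lem:vectorsizeNEW} applied directly to the squared set $\mathbf{v}\times\mathbf{v}$ with $s=d$ and $m'=m=\rho$, yielding $(1-\gamma)\rho^2 s_1\le \vert\mathbf{f(v,1)}\vert\le (1+\gamma)\rho^2 s_1$ with failure probability at most $2n^{-10}$, under the hypothesis $s_1\ge 30\ln(n)/(\gamma^2\rho^2)$, which is exactly the stated lower bound on $s_i$ specialized to $i=1$ (where $(1-\gamma)^{2^1-2}=1$). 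For the inductive step, squaring $\mathbf{f(v,i-1)}$ produces a set of cardinality $\vert\mathbf{f(v,i-1)}\vert^2$ inside a universe of size $s_{i-1}^2$; I then invoke Lemma~\ref{lem:vectorsizeNEW} with $s=s_{i-1}$, $m'=(1-\gamma)^{2^{i-1}-1}\rho^{2^{i-1}}$ and $m=(1+\gamma)^{2^{i-1}-1}\rho^{2^{i-1}}$. This gives $m'^2=(1-\gamma)^{2^i-2}\rho^{2^i}$, so the stated lower bound on $s_i$ is precisely the lemma's condition $s_i\ge 30\ln(n)/(\gamma^2 m'^2)$, and the conclusion is $(1-\gamma)^{2^i-1}\rho^{2^i} s_i\le \vert\mathbf{f(v,i)}\vert\le (1+\gamma)^{2^i-1}\rho^{2^i} s_i$, which is the strengthened hypothesis at step $i$. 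Since $(1-\gamma)^{2^i}\le (1-\gamma)^{2^i-1}$ and $(1+\gamma)^{2^i-1}\le (1+\gamma)^{2^i}$, the weaker bound stated in the lemma follows, and a union bound over the inductive hypothesis (failure $\le 2(i-1)n^{-10}$) and the current application of Lemma~\ref{lem:vectorsizeNEW} (failure $\le 2n^{-10}$) yields the claimed $2in^{-10}$.

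The main subtlety, and the reason to strengthen the hypothesis from exponent $2^i$ to exponent $2^i-1$, is that squaring doubles whatever error exponent is currently present: carrying only the weaker exponent $2^{i-1}$ through the square would force $m'=(1-\gamma)^{2^{i-1}}\rho^{2^{i-1}}$, and the resulting lower bound on $s_i$ would then carry $(1-\gamma)^{2^i}$ in the denominator instead of the $(1-\gamma)^{2^i-2}$ that the lemma claims. Once this offset-by-one is absorbed in the inductive step, everything else is routine bookkeeping of the Chernoff guarantees provided by Lemma~\ref{lem:vectorsizeNEW}.
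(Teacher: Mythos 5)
Your proof is correct and follows essentially the same route as the paper's: induction on $i$, using a strengthened inductive hypothesis with error exponent $2^{i-1}-1$ rather than $2^{i-1}$, applying Lemma~\ref{lem:vectorsizeNEW} once per iteration, and closing with a union bound. Your explicit discussion of why the offset-by-one in the exponent is necessary (to avoid forcing $(1-\gamma)^{2^i}$ rather than $(1-\gamma)^{2^i-2}$ into the required sample size) makes the key point more visible than the paper does, but the argument and the quantitative bookkeeping coincide with the paper's proof.
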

\begin{proof}
Let $\mathbf{v}$ be as described. We show the lemma by induction on $i$. Clearly, when squaring the vector $\mathbf{v}$ once, i.e., for $i=1$, the resulting vector has Hamming weight $\vert \mathbf{v}\vert^2$ and dimension~$d^2$. Hence, by Lemma \ref{lem:vectorsizeNEW} we have
\[
(1-\gamma)\frac{\vert \mathbf{v}\vert^2}{d^2}\cdot s_1\le \vert \mathbf{f(v,1)}\vert\le (1+\gamma)\frac{\vert \mathbf{v}\vert^2}{d^2}\cdot s_1
\]
with probability at least $1-2n^{-10}$ for our choice of $s_1$. 
Assume now that after $i-1$ iterations the following bounds hold:
\begin{align}
\label{align:boundsonpriorstep}
(1-\gamma)^{2^{i-1}-1}\frac{\vert \mathbf{v}\vert^{2^{i-1}}}{d^{2^{i-1}}}s_{i-1}\le \vert \mathbf{f(v,i-1)}\vert\le (1+\gamma)^{2^{i-1}-1}\frac{\vert \mathbf{v}\vert^{2^{i-1}}}{d^{2^{i-1}}}s_{i-1}.
\end{align}
Then Lemma \ref{lem:vectorsizeNEW} gives that after $i$ iterations of the squaring-and-sampling reduction, we have 
\[
(1-\gamma)^{2^i-1}\frac{\vert \mathbf{v}\vert^{2^i}s_{i-1}^2}{d^{2^i}}\cdot \frac{s_i}{s_{i-1}^2}\le \vert \mathbf{f(v,i)}\vert\le (1+\gamma)^{2^i-1}\frac{\vert \mathbf{v}\vert^{2^i}s_{i-1}^2}{d^{2^i}}\cdot \frac{s_i}{s_{i-1}^2}
\]
with probability at least $1-2n^{-10}$ for $s_i\ge \frac{30\ln(n)d^{2^i}}{\gamma^2(1-\gamma)^{2^i-2}\vert \mathbf{v}\vert^{2^i}}$.
This particularly means that
\[
(1-\gamma)^{2^i}\frac{\vert \mathbf{v}\vert^{2^i}}{d^{2^i}}\cdot s_i\le \vert \mathbf{f(v,i)}\vert\le (1+\gamma)^{2^i}\frac{\vert \mathbf{v}\vert^{2^i}}{d^{2^i}}\cdot s_i.
\]
Now, to ensure these bounds, we assumed that $\vert \mathbf{f(v,i-1)}\vert$ satisfies certain bounds (see (\ref{align:boundsonpriorstep})). So in order to ensure that $\mathbf{f(v,i)}$ satisfies the given bounds, we need $\mathbf{f(v,j)}$ to satisfy similar bounds for every $1\le j\le i$. By a union bound, we see that $\vert \mathbf{f(v,j)}\vert$ satisfies both upper and lower bounds for all $1\le j\le i$ (simultaneously) with probability at least $1-2in^{-10}$ when sampling $s_j\ge \frac{30\ln(n)d^{2^j}}{\gamma^2(1-\gamma)^{2^j-2}\vert \mathbf{v}\vert^{2^j}}$ at step $j$.
Hence, $\vert \mathbf{f(v,i)}\vert$ satisfies the given bound with probability at least $1-2in^{-10}$.
\end{proof}

The next section will use Lemma \ref{lem:squaringsampling} to bound the Jaccard similarity after $i$ iterations of the squaring/sampling reduction.

\subsection{Combining the bounds}
\label{sec:jaccbounds}
For a given pair of vectors $\mathbf{a}$ and $\mathbf{b}$, Lemma \ref{lem:squaringsampling} gives upper and lower bounds on the Jaccard similarity $J = J\Big(\mathbf{f(a,i)},\mathbf{f(b,i)}\Big)$. We claim that with probability at least $1-6in^{-10}$:
\begin{align*}
     J &\ge\frac{(1-\gamma)^{2^i-1}\frac{\vert \mathbf{a}\cap \mathbf{b}\vert^{2^i}}{d^{2^i}}s_i}{(1+\gamma)^{2^i-1}\frac{\vert \mathbf{a}\vert^{2^i}}{d^{2^i}}s_i+(1+\gamma)^{2^i-1}\frac{\vert \mathbf{b}\vert^{2^i}}{d^{2^i}}s_i-(1-\gamma)^{2^i-1}\frac{\vert \mathbf{a}\cap \mathbf{b}\vert^{2^i}}{d^{2^i}}s_i} \\&\ge\frac{(1-\gamma)^{2^i}\vert \mathbf{a}\cap \mathbf{b}\vert^{2^i}}{(1+\gamma)^{2^i}\left(\vert \mathbf{a}\vert^{2^i}+\vert \mathbf{b}\vert^{2^i}\right)-(1-\gamma)^{2^i}\vert \mathbf{a}\cap \mathbf{b}\vert^{2^i}}
\end{align*}

\begin{align*}
    J &\le \frac{(1+\gamma)^{2^i-1}\frac{\vert \mathbf{a}\cap \mathbf{b}\vert^{2^i}}{d^{2^i}}s_i}{(1-\gamma)^{2^i-1}\frac{\vert \mathbf{a}\vert^{2^i}}{d^{2^i}}s_i+(1-\gamma)^{2^i-1}\frac{\vert \mathbf{b}\vert^{2^i}}{d^{2^i}}s_i-(1+\gamma)^{2^i-1}\frac{\vert \mathbf{a}\cap \mathbf{b}\vert^{2^i}}{d^{2^i}}s_i}\\ &\le\frac{(1+\gamma)^{2^i}\vert \mathbf{a}\cap \mathbf{b}\vert^{2^i}}{(1-\gamma)^{2^i}\left(\vert \mathbf{a}\vert^{2^i}+\vert \mathbf{b}\vert^{2^i}\right)-(1+\gamma)^{2^i}\vert \mathbf{a}\cap \mathbf{b}\vert^{2^i}}
\end{align*}

\medskip
\noindent
This is easily seen by taking a union bound over the probabilities that each of $\vert \mathbf{a}\vert$, $\vert \mathbf{b}\vert$ and $\vert \mathbf{a}\cap \mathbf{b}\vert$ violate either the upper or the lower bound.
Next, we claim that these bounds imply:

\begin{align*}
    J \ge \frac{(1-\gamma)^{2^i}\vert \mathbf{a}\cap \mathbf{b}\vert^{2^i}}{(1+4\gamma)^{2^i}\left(\vert \mathbf{a}\vert^{2^i}+\vert \mathbf{b}\vert^{2^i}-\vert \mathbf{a}\cap \mathbf{b}\vert^{2^i}\right)}\ge \frac{(1-\gamma)^{2^i}\vert \mathbf{a}\cap \mathbf{b}\vert^{2^i}}{(1+\gamma)^{2^i}\left(\vert \mathbf{a}\vert^{2^i}+\vert \mathbf{b}\vert^{2^i}\right)-(1-\gamma)^{2^i}\vert \mathbf{a}\cap \mathbf{b}\vert^{2^i}}
\end{align*}

\begin{align*}
   J \le \frac{(1+\gamma)^{2^i}\vert \mathbf{a}\cap \mathbf{b}\vert^{2^i}}{(1-\gamma)^{2^i}\left(\vert \mathbf{a}\vert^{2^i}+\vert \mathbf{b}\vert^{2^i}\right)-(1+\gamma)^{2^i}\vert \mathbf{a}\cap \mathbf{b}\vert^{2^i}}\le \frac{(1+\gamma)^{2^i}\vert \mathbf{a}\cap \mathbf{b}\vert^{2^i}}{(1-4\gamma)^{2^i}\left(\vert \mathbf{a}\vert^{2^i}+\vert \mathbf{b}\vert^{2^i}-\vert \mathbf{a}\cap \mathbf{b}\vert^{2^i}\right)} .
\end{align*}

\medskip
\noindent
\final{The argument can be found in Appendix A in the full version on ArXiv \cite[App.~A]{arxiveversion}.}
\arxiv{The argument can be found in Appendix \ref{app:jaccardbounds}.}
In particular, we have argued for the following lemma. We ignore the sample size for now and discuss it in Section~\ref{sec:sumup}. 

\begin{lemma}
\label{lem:jaccardboundsSampling}
Let $\mathcal{A}$ and $\mathcal{B}$ be an instance of Bichromatic Closest Pair with Jaccard similarity. After applying the Squaring and Sampling mapping, $f$, $i$ times as previously described to each set in $\mathcal{A}$ and $\mathcal{B}$, we have for all $n^2$ pairs $(\mathbf{a},\mathbf{b})\in \mathcal{A}\times \mathcal{B}$ in the instance that:

\begin{align*}
    \left(\frac{1-\gamma}{1+4\gamma}\right)^{2^i} \frac{\vert \mathbf{a}\cap \mathbf{b}\vert^{2^i}}{\vert \mathbf{a}\vert^{2^i}+\vert \mathbf{b}\vert^{2^i}-\vert \mathbf{a}\cap \mathbf{b}\vert^{2^i}}&\le J\Big(\mathbf{f(a,i)},\mathbf{f(b,i)}\Big) \le
     \left(\frac{1+\gamma}{1-4\gamma}\right)^{2^i} \frac{\vert \mathbf{a}\cap \mathbf{b}\vert^{2^i}}{\vert \mathbf{a}\vert^{2^i}+\vert \mathbf{b}\vert^{2^i}-\vert \mathbf{a}\cap \mathbf{b}\vert^{2^i}}
\end{align*}

\medskip
\noindent
with probability at least $1-6in^{-8}$
\end{lemma}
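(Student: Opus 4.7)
The plan is to assemble the per-pair bounds already derived in Section~\ref{sec:jaccbounds} into the uniform bound claimed here. The starting observation is that squaring distributes over intersection, and the sampling step restricts a pair of sets to a common random sample $\mathbf{z}_j$, so if every set is processed with the same samples $\mathbf{z}_1,\dots,\mathbf{z}_i$ we get the set identity $\mathbf{f(a,i)}\cap \mathbf{f(b,i)} = \mathbf{f(a\cap b,i)}$ for every pair $(\mathbf{a},\mathbf{b})$. Hence I can apply Lemma~\ref{lem:squaringsampling} with $\mathbf{v}$ instantiated to each of $\mathbf{a}$, $\mathbf{b}$, and $\mathbf{a}\cap\mathbf{b}$ to control the three cardinalities appearing in the Jaccard ratio $J\bigl(\mathbf{f(a,i)},\mathbf{f(b,i)}\bigr)$.

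For a fixed pair, a union bound over those three events shows that all three estimates hold with probability at least $1-6in^{-10}$. Plugging the worst-case lower and upper bounds into the Jaccard formula (and canceling the common factor $s_i/d^{2^i}$ from numerator and denominator) reproduces the two intermediate inequalities displayed in Section~\ref{sec:jaccbounds}, in which the denominators still contain the awkward term $(1+\gamma)^{2^i}(\vert\mathbf{a}\vert^{2^i}+\vert\mathbf{b}\vert^{2^i})-(1-\gamma)^{2^i}\vert\mathbf{a}\cap\mathbf{b}\vert^{2^i}$ and its symmetric counterpart.

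The next step is to replace these denominators by the cleaner $(1\pm 4\gamma)^{2^i}(\vert\mathbf{a}\vert^{2^i}+\vert\mathbf{b}\vert^{2^i}-\vert\mathbf{a}\cap\mathbf{b}\vert^{2^i})$. Using $\vert\mathbf{a}\cap\mathbf{b}\vert^{2^i}\le \min(\vert\mathbf{a}\vert^{2^i},\vert\mathbf{b}\vert^{2^i})\le \tfrac{1}{2}(\vert\mathbf{a}\vert^{2^i}+\vert\mathbf{b}\vert^{2^i})$ this reduces to a scalar inequality of the form $(1+4\gamma)^{2^i}+(1-\gamma)^{2^i}\ge 2(1+\gamma)^{2^i}$, which follows from the convexity of $t\mapsto t^{2^i}$ (Jensen applied to $1+4\gamma$ and $1-\gamma$ with midpoint $1+\tfrac{3\gamma}{2}\ge 1+\gamma$), together with a symmetric calculation for the upper bound. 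I expect this slightly fiddly algebraic manipulation to be the main obstacle; the computation is routine but easy to mis-bookkeep, and I will defer the details to the appendix.

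It remains to promote the per-pair bound to a bound that holds simultaneously for all $n^2$ pairs. The only randomness involved is that used in the $2n$ per-set quantities $\vert\mathbf{f(a,i)}\vert,\vert\mathbf{f(b,i)}\vert$ and the $n^2$ intersection quantities $\vert\mathbf{f(a\cap b,i)}\vert$. Each fails to satisfy Lemma~\ref{lem:squaringsampling} with probability at most $2in^{-10}$, so a union bound over these $n^2+2n$ quantities gives overall failure probability at most $6in^{-8}$ for $n$ sufficiently large, which matches the statement and completes the proof.
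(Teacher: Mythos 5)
Your high-level structure matches the paper's: apply Lemma~\ref{lem:squaringsampling} to each of $\mathbf{a}$, $\mathbf{b}$, and $\mathbf{a}\cap\mathbf{b}$ (using the observation, which the paper leaves implicit, that with the same random samples $\mathbf{f(a,i)}\cap\mathbf{f(b,i)}=\mathbf{f(a\cap b,i)}$), then union-bound over the three quantities per pair, simplify the denominators using $\vert\mathbf{a}\cap\mathbf{b}\vert^{2^i}\le\tfrac12(\vert\mathbf{a}\vert^{2^i}+\vert\mathbf{b}\vert^{2^i})$, and finally union-bound over all $n^2$ pairs (your $n^2+2n$ accounting also yields a probability bound $\le 6in^{-8}$, consistent with the lemma). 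The Jensen/convexity argument for the lower-bound direction,
\[
(1+4\gamma)^{2^i}+(1-\gamma)^{2^i}\ge 2\left(1+\tfrac{3\gamma}{2}\right)^{2^i}\ge 2(1+\gamma)^{2^i},
\]
is correct, unconditional in $\gamma$, and actually cleaner than the paper's Bernoulli expansion.

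However, the claim that the upper-bound direction follows ``together with a symmetric calculation'' is a genuine gap. The inequality you need for the $J$ upper bound is
\[
(1+\gamma)^{2^i}+(1-4\gamma)^{2^i}\le 2(1-\gamma)^{2^i},
\]
and convexity of $t\mapsto t^{2^i}$ goes the \emph{wrong} way here: it gives $\tfrac12\bigl((1+\gamma)^{2^i}+(1-4\gamma)^{2^i}\bigr)\ge\left(1-\tfrac{3\gamma}{2}\right)^{2^i}$, which is a lower bound on the left-hand side, whereas you need an upper bound by $(1-\gamma)^{2^i}$. Indeed this inequality is \emph{false} for moderate $\gamma$: for $i=1$, $\gamma=0.2$ we get $(1.2)^2+(0.2)^2=1.48>1.28=2(0.8)^2$. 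So unlike the lower-bound direction, this one is not a consequence of convexity and demands an explicit constraint on $\gamma$ (the paper imposes $\gamma<1/2^{i+1}$, together with a further constraint $\gamma<\delta/(20T)$ later, and argues via a Bernoulli/Taylor expansion). Your proposal neither states a constraint on $\gamma$ nor supplies an argument in its place, so the upper-bound half of the lemma is not established.
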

Hence, with high probability none of the Jaccard similarities diverge too much from (\ref{JaccardSquaredNoSampling}) due to sampling. This was exactly what we wanted, as this allows us to reduce the dimension by sampling.

\subsection{Summing up}
\label{sec:sumup}
Recall that in our setting we reduce from instances where the set sizes of all red and blue sets are fixed. We now describe thresholds such that solving the instances constructed by the reduction $f$ cannot be done in truly subquadratic time. 
\begin{lemma}
\label{lem:sumup}
Let $\mathcal{A}$ and $\mathcal{B}$ be two collections of $n$ sets from a universe of dimension $d$, where all sets in $\mathcal{A}$ have size $y$ and all sets in $\mathcal{B}$ have size $z$. Assume that $(\mathcal{A},\mathcal{B})$ is taken from a family of instances of Bichromatic Closest Pair with Jaccard similarity, which require time $\Omega(n^{2-\delta})$ for thresholds $t_1=\frac{x_1}{y+z-x_1}$ and $t_2=\frac{x_2}{y+z-x_2}$. The reduction which applies $f$ $i$ times to each set in $\mathbf{s}\in \mathcal{A}\cup \mathcal{B}$ for $i\ge 1$ constructs an instance of Bichromatic Closest Pair with Jaccard similarity, which requires time $\Omega(n^{2-\delta})$ time for thresholds 
\[
t_1'=\left(\frac{1-\gamma}{1+4\gamma}\right)^{2^i}\frac{x_1^{2^i}}{y^{2^i}+z^{2^i}-x_1^{2^i}},\qquad \text{and}\qquad
t_2'=\left(\frac{1+\gamma}{1-4\gamma}\right)^{2^i}\frac{x_2^{2^i}}{y^{2^i}+z^{2^i}-x_2^{2^i}}.
\]
whose solution provides a valid solution to the original instance with high probability when sampling $s_j>\frac{30\ln(n)d^{2^j}}{\gamma^2(1-\gamma)^{2^j-2}x_2^{2^j}}$ at each step $1\le j\le i$.
\end{lemma}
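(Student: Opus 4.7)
I would apply the squaring-and-sampling reduction $f$ exactly $i$ times to every set in $\mathcal{A}\cup\mathcal{B}$, producing a new instance $(\mathcal{A}',\mathcal{B}')$ of Bichromatic Closest Pair with Jaccard similarity, and argue that with high probability a solver for it at thresholds $t_1',t_2'$ solves the original at thresholds $t_1,t_2$. The reduction itself runs in time polynomial in $n$, $d$, and the sample size $s_i$, so any $o(n^{2-\delta})$ algorithm for the new instance yields one for the original, contradicting the hypothesis.

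The heart of the argument is a monotonicity observation. For a fixed pair $(\mathbf{a},\mathbf{b})\in\mathcal{A}\times\mathcal{B}$, Lemma~\ref{lem:jaccardboundsSampling} gives a two-sided bound on $J\bigl(\mathbf{f(a,i)},\mathbf{f(b,i)}\bigr)$ that fails with probability at most $6in^{-8}$. Since $|\mathbf{a}|=y$ and $|\mathbf{b}|=z$ are fixed across the whole instance, both the lower and the upper bound are strictly increasing functions of $|\mathbf{a}\cap\mathbf{b}|$: substituting $u=|\mathbf{a}\cap\mathbf{b}|^{2^i}$, each expression takes the form $\lambda u/(\mu-u)$ with positive $\lambda,\mu$, which is monotone in $u$ on the relevant range. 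Plugging $|\mathbf{a}\cap\mathbf{b}|=x_1$ into the lower bound reproduces exactly $t_1'$, and plugging $|\mathbf{a}\cap\mathbf{b}|=x_2$ into the upper bound reproduces $t_2'$. A union bound over the $n^2$ pairs then gives failure probability $6in^{-6}=o(1)$, after which every pair originally having Jaccard $\ge t_1$ satisfies Jaccard $\ge t_1'$ in the new instance, and every pair originally below $t_2$ stays below $t_2'$. Thus a decision for $(\mathcal{A}',\mathcal{B}')$ with thresholds $t_1',t_2'$ decides $(\mathcal{A},\mathcal{B})$ with thresholds $t_1,t_2$.

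The stated sample size $s_j>\frac{30\ln(n)d^{2^j}}{\gamma^2(1-\gamma)^{2^j-2}x_2^{2^j}}$ is exactly what Lemma~\ref{lem:squaringsampling} requires with $|\mathbf{v}|=x_2$, which is the smallest of $y$, $z$, and the intersection sizes we need to pin down in order to invoke Lemma~\ref{lem:jaccardboundsSampling}. The main obstacle is that for pairs whose intersection is strictly smaller than $x_2$, Lemma~\ref{lem:squaringsampling} does not directly provide the concentration needed, yet we still require an upper bound on the new Jaccard similarity to conclude that such pairs stay below $t_2'$. I expect the cleanest fix to be a separate multiplicative-Chernoff tail that is valid even for tiny means, such as $\Pr[X\ge t]\le (e\mathbb{E}[X]/t)^t$, applied to show that the squared-and-sampled intersection of these pairs remains negligible with high probability, so their Jaccard is bounded well below $t_2'$ and the union bound goes through.
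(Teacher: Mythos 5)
Your proposal follows the paper's own argument essentially step for step: apply $f$ iteratively, invoke Lemma~\ref{lem:jaccardboundsSampling}, use monotonicity of $u\mapsto u/(y^{2^i}+z^{2^i}-u)$ to translate the two-sided bound into the thresholds $t_1',t_2'$, union-bound over the $n^2$ pairs, and then notice that the calibration of $s_j$ to $x_2$ leaves the pairs with $|\mathbf{a}\cap\mathbf{b}|<x_2$ unaccounted for by Lemma~\ref{lem:squaringsampling}, but observe that those pairs only ever need the \emph{upper} tail to stay below $t_2'$. That last observation is precisely the one the paper makes.

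Where you differ is only in how you propose to close the small-intersection gap. Your suggested Poissonian tail $\Pr[X\ge t]\le (e\mathbb{E}[X]/t)^t$ is appropriate for intersections far below $x_2$, but it gives nothing useful when $|\mathbf{a}\cap\mathbf{b}|$ is just slightly below $x_2$ (the base $e\mathbb{E}[X]/t$ exceeds $1$). The cleaner route — and what the paper is implicitly invoking when it says ``by bounding the size of each term as we did in Lemma~\ref{lem:squaringsampling} using the chosen $s_i$'' — is to note that the standard multiplicative Chernoff upper tail with the sample size calibrated to $x_2$ only improves as the true intersection shrinks: writing the target deviation $(1+\gamma)(x_2/d)^{2^j}s_j$ as $(1+\Delta)\mu$ with $\mu=|\mathbf{a}\cap\mathbf{b}|^{2^j}s_j/d^{2^j}$, the exponent $\Delta^2\mu/(2+\Delta)$ is bounded below by its value at $|\mathbf{a}\cap\mathbf{b}|=x_2$. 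Equivalently, a stochastic-domination/coupling argument (enlarge the intersection to size exactly $x_2$; under a shared sample the sampled intersection only grows) gives the same conclusion without any extra tail inequality. So your plan is correct in structure and identifies the right subtlety; the one inequality you reach for should either be replaced by, or supplemented with, the multiplicative Chernoff bound in the near-$x_2$ regime.
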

\begin{proof}
Lemma \ref{lem:jaccardboundsSampling} ensures that with high probability a solution to the constructed instance provides a valid solution to the original instance, since no pair of sets is likely to have Jaccard similarities that deviate beyond the chosen thresholds.

In Lemma \ref{lem:jaccardboundsSampling} we skipped the discussion of the sample size at each iteration -- we will argue for it now. From Lemma \ref{lem:squaringsampling}, it is easily seen that we maximize the needed sample size for all of $\vert \mathbf{a}\vert$, $\vert \mathbf{b}\vert$ or $\vert \mathbf{a}\cap \mathbf{b}\vert$ for any choice of $\mathbf{a}$ and $\mathbf{b}$ in iteration $i$ by
\[
s_i>\frac{30\ln(n)d^{2^i}}{\gamma^2(1-\gamma)^{2^i-2}\min_{(\mathbf{a},\mathbf{b})\in \mathcal{A}\times \mathcal{B}}\left\{\vert \mathbf{a}\cap \mathbf{b}\vert\right\}^{2^i}}.
\]
Hence, sampling $s_i$ elements from the universe will ensure that each of the upper and lower bounds for either $\vert \mathbf{a}\vert$, $\vert \mathbf{b}\vert$ or $\vert \mathbf{a}\cap \mathbf{b}\vert$ will fail with probability at most $n^{-10}$ in that iteration. 
As $\min_{(\mathbf{a},\mathbf{b})\in \mathcal{A}\times \mathcal{B}}\left\{\vert \mathbf{a}\cap \mathbf{b}\vert\right\}$ is unknown, we instead use $x_2$, which was the intersection size for a pair with Jaccard similarity $j_2$. Such a pair need not exist, but as the set sizes are fixed, $x_2$ can be easily computed.

We have left to argue that the pairs with intersection smaller than $x_2$ also satisfy the bounds in Lemma \ref{lem:jaccardboundsSampling} with high probability. The main observation is that they only need to satisfy the upper bound, as the resulting Jaccard similarities need only to stay below the lower threshold, $t_2'$ --- the Jaccard similarities can become arbitrarily small without affecting the result.

\medskip
By bounding the size of each term as we did in Lemma \ref{lem:squaringsampling} using the chosen $s_i$, we see that the error probabilities are still at most $n^{-10}$ for each of $\vert \mathbf{a}\vert$, $\vert \mathbf{b}\vert$ and $\vert \mathbf{a}\cap \mathbf{b}\vert$ for any choice of $(\mathbf{a},\mathbf{b})\in \mathcal{A}\times \mathcal{B}$.
\end{proof}

\section{Main Result}
\label{sec:result}
We are now ready to prove Theorem \ref{thm:mainthm}. We first give some intuition behind the proof and state a few lemmas to ease the proof.
For convenience we restate Theorem \ref{thm:mainthm}.
\begingroup
\def\thetheorem{\ref{thm:mainthm}}
\begin{theorem}
Assuming the Orthogonal Vectors Conjecture (OVC), the following holds: for any $\delta>0$, there exists an $\varepsilon>0$ such that for any given $j_2<j_1<1-\delta$ satisfying $j_1\le j_2^{1-\varepsilon}$, solving Bichromatic Closest Pair with Jaccard similarity for $n$ red and $n$ blue sets for sets from a universe of size
$\ln(n)/j_2^{O(\log(1/j_1))}$
for thresholds $j_1$ and~$j_2$ requires time  $\Omega(n^{2-\delta})$.
\end{theorem}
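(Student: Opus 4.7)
The plan is to transport the constant-threshold hard instance of Lemma~\ref{lem:basecase} to any prescribed pair $(j_1,j_2)$ in the allowed range by composing three of the four reductions listed in Section~\ref{sec:reductions}: squaring-and-sampling (to push the thresholds down while amplifying their ratio), adding common elements (to push both thresholds up, shrinking the ratio), and adding differing elements (to lower the similarities while keeping the intersection fixed). Since each of these is a self-reduction in the number of sets, a truly sub-quadratic algorithm at the target thresholds $(j_1,j_2)$ would immediately yield one at the base thresholds, contradicting Lemma~\ref{lem:basecase} and hence OVC.

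Given $\delta>0$ I would first fix a base hard instance at $(j_1^*,j_2^*)=(1/T,1/(2T+1))$ with $T=T(\delta)$ provided by Lemma~\ref{lem:basecase}, and pick $\varepsilon=\varepsilon(\delta)$ not exceeding $\log 2/\log(1/j_2^*)$, which is the Jaccard-exponent gap attainable via squaring alone from this base. For each target pair $(j_1,j_2)$ the plan is to choose an integer $i$ with $2^i=\Theta(\log(1/j_2))$ so that $i$ iterations of the squaring-and-sampling map $f$ drive the base thresholds to intermediate values $(t_1^{(i)},t_2^{(i)})$ with $t_2^{(i)}\le j_2$ and ratio $t_1^{(i)}/t_2^{(i)}$ at least $j_1/j_2$. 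Lemma~\ref{lem:sumup} then certifies that the intermediate instance is hard for $(t_1^{(i)},t_2^{(i)})$, and chasing the sample sizes $s_j$ from Lemma~\ref{lem:squaringsampling} through the $i$ iterations gives a universe of size $\ln(n)/j_2^{O(2^i)}$; since $2^i=O(\log(1/j_2))=O(\log(1/j_1))$ under the hypothesis $j_1\le j_2^{1-\varepsilon}$, this matches the claimed bound $\ln(n)/j_2^{O(\log(1/j_1))}$.

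Next I would match the target pair exactly by solving, in the one-parameter formulas $J\mapsto(x+C)/(y+z-x+C)$ for common-element additions and $J\mapsto x/(y+z-x+D)$ for differing-element additions, for non-negative $C$ and $D$ that simultaneously transport $(t_1^{(i)},t_2^{(i)})$ onto $(j_1,j_2)$. These two reductions act monotonically and in distinct ways on a $(t_1,t_2)$ pair, so the $(C,D)$-sweep covers a two-dimensional region of reachable thresholds containing every admissible target; the hypothesis $j_1<1-\delta$ keeps $C$ bounded from above and hence keeps the additional blow-up of the universe negligible compared to the squaring contribution. Along the way the multiplicative slack $((1-\gamma)/(1+4\gamma))^{2^i}$ from Lemma~\ref{lem:jaccardboundsSampling} is absorbed by taking $\gamma=O(2^{-i})$, so that the slack sits within an arbitrary constant of $1$.

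The hard part will be verifying that a single $\varepsilon=\varepsilon(\delta)$ suffices uniformly over the whole admissible region, rather than one that deteriorates as $j_2\to 0$ or as $j_1$ approaches $j_2$. This amounts to showing that the discrete geometric sequence of squared thresholds $\{(t_1^{(i)},t_2^{(i)})\}_{i\ge 0}$, together with the continuous common/differing-element adjustments, covers every pair with $j_1\le j_2^{1-\varepsilon}$ and $j_1<1-\delta$ with a uniform quantitative relationship between $i$, the intermediate ratio $\approx 2^{2^i}$, and the target ratio $j_2^{-\varepsilon}$. Once this matching is carried out and the sampling errors are absorbed, the lower bound $\Omega(n^{2-\delta})$ is inherited directly from Lemma~\ref{lem:basecase} and the universe bound follows from the sample-size computation.
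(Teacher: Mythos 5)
Your decomposition differs from the paper's in a way that turns out to matter. You propose to square-and-sample directly from the constant base thresholds $(1/T,1/(2T+1))$ and then solve for additions of common elements ($C$) and differing elements ($D$) that transport the squared thresholds onto the target $(j_1,j_2)$. The paper instead \emph{first} adds common elements to push both thresholds above $1-\delta$ (Lemma~\ref{lem:deltamapping}), \emph{then} squares, and finally uses a single $\alpha$-adjustment (differing elements only) to pin $t_1'=j_1$ while merely \emph{checking the inequality} $\alpha t_2^*<j_2$ rather than matching $j_2$ exactly. That ordering is not cosmetic: because the post-bump thresholds are both within $O(\delta)$ of $1$, squaring acts essentially as exponentiation of $1-j$, so the quantity $\log j_1/\log(\alpha t_2^*)$ admits a clean, $i$-independent bound of $1-\Theta(1/T)$, and the desired hardness transfers monotonically.

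The concrete gap in your version is the non-negativity of $C$ and $D$. Squaring from $(1/T,1/(2T+1))$ produces intermediate thresholds of magnitude roughly $T^{-2^i}$ with intersection ratio $2^{2^i}$; to hit a target $(j_1,j_2)$ with $j_1$ several orders of magnitude larger, the common-element addition $C$ that forces $t_1'=j_1$ is so large that it also drags $t_2'$ up to roughly $j_1$, collapsing the ratio toward $1$. Solving the two equations then requires $D<0$. For example, with $T=10$ and target $(j_1,j_2)=(0.029,0.01)$ — which satisfies $j_1\le j_2^{1-\varepsilon}$ for your claimed $\varepsilon=\log 2/\log(2T+1)\approx 0.23$ — one can check that for $i=1,2,3$ the required $D$ is negative; no non-negative $(C,D)$ works, and adding more squaring only makes it worse. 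This shows your claimed $\varepsilon$, ``the Jaccard-exponent gap attainable via squaring alone,'' is too optimistic: the exponent gap is preserved only \emph{on} the squaring trajectory, and the adjustments needed to reach off-trajectory targets strictly reduce it. The paper's $\varepsilon=\Theta(1/T)$ (not $\Theta(1/\log T)$) reflects exactly this loss. You identify the uniformity question as ``the hard part'' but leave it unresolved; the fix is the paper's bump-up-then-square ordering, which you currently have reversed.

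A smaller issue: your sample-size claim relies on $2^i=\Theta(\log(1/j_2))$, but when $j_2$ is close to $1-\delta$ this forces $i\le 0$, which is outside the range $i\ge 1$ of Lemma~\ref{lem:squaringsampling}; your plan does not say what happens there, whereas the paper's Lemma~\ref{lem:deltamapping} step handles it because the post-bump thresholds already sit above $1-\delta$ and $i$ is chosen maximal with $t_1\ge j_1$ (which can be $i=1$).
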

\addtocounter{theorem}{-1}
\endgroup

\subsection{Intuition}
The proof of Theorem \ref{thm:mainthm} reduces instances of Bichromatic Closest Pair as described in Section~\ref{sec:baseinstances} by composing three reductions, that together construct instances of Bichromatic Closest Pair with Jaccard similarity, which requires time $\Omega(n^{2-\delta})$ for the given thresholds $j_1$ and $j_2$ and some~$\varepsilon$.
A short description of each of the reductions can be found in Section~\ref{sec:reductions}. Below, we give three lemmas showing that these reductions preserve hardness.

The first lemma states that adding common elements to all sets in the instance will preserve hardness. This reduction increases the Jaccard similarity of all pairs of red and blue sets, and by choice of the number of added elements, we ensure that pairs of sets that initially had Jaccard similarity higher than the \emph{lower} threshold will get Jaccard similarity greater than $1-\delta$.
Hence, we get hardness for thresholds that are greater than $1-\delta$. From this point we can decrease the thresholds using two other reductions to achieve the given thresholds, that by assumption are less than $1-\delta$.

The second lemma states that the squaring-and-sampling reduction, discussed in detail in Section~\ref{sec:squaringdetails}, preserves hardness. The squaring-and-sampling reduction allows us to decrease the thresholds, so they come close to $j_1$ and $j_2$. Finally, the third lemma states that the reduction, which adds elements to only red sets will still preserve hardness. This reduction ensures that we can decrease the Jaccard similarity further. We will use it in such a way, that we effectively multiply the upper bound by a well-chosen $\alpha$ that ensures that the upper threshold is $j_1$ after this reduction. The proof ends by picking an $\varepsilon$, such that $j_2$ is strictly greater than the current lower threshold, and thus preserves hardness for the thresholds $j_1$ and $j_2$.
    
\subsection{Supporting Lemmas}
In the following, assume that $\mathcal{A}$ and $\mathcal{B}$ are collections of  $n$ red and $n$ blue sets from a universe~$U$, respectively.

\begin{lemma}
\label{lem:deltamapping}
Let $0<\delta\le 1$ be given and let $(\mathcal{A},\mathcal{B})$ be any instance of Bichromatic Closest Pair with Jaccard similarity as described in Lemma \ref{lem:rubinstances}.
Define $\ell:=\max_{\mathbf{q}\in \mathcal{A}\cup \mathcal{B}}\{\vert \mathbf{q}\vert\}\cdot(1/\delta-1)$ and $\mathbf{x}:=\{x_1,...,x_\ell\}$ such that $\mathbf{x}\cap (\mathcal{A}\cup \mathcal{B})=\emptyset$, and further define the
mapping $g:\mathcal{A}\cup \mathcal{B}\rightarrow \mathcal{A}'\cup \mathcal{B}'$ by $g(\mathbf{v})=\mathbf{v}\cup \mathbf{x}$ where $\mathcal{A}'=\mathcal{A}\cup \mathbf{x}$ and equivalently $\mathcal{B}'=\mathcal{B}\cup \mathbf{x}$.
The reduction that applies $g$ to every element of $\mathcal{A}$ and $\mathcal{B}$ generates an instance $(\mathcal{A}',\mathcal{B}')$ of Bichromatic Closest Pair with Jaccard similarity that requires time $\Omega(n^{2-\delta})$ for some thresholds $t_1',t_2'\ge 1-\delta$.
\end{lemma}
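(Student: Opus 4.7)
The plan is to choose the thresholds $t_1'$ and $t_2'$ as the images of the original thresholds $j_1 = 1/T$ and $j_2 = 1/(2T+1)$ (from the proof of Lemma \ref{lem:basecase}) under the set-enlargement map $g$, then show (i) the reduction preserves hardness in the promise sense and (ii) the resulting lower threshold $t_2'$ is at least $1-\delta$. Because the specific hard instance of Lemma \ref{lem:rubinstances} has fixed sizes ($|\mathbf{a}|=Tm$, $|\mathbf{b}|=m$) and the Hamming-distance promise fixes the intersection exactly, I can work with concrete formulae rather than bounds.

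First I would compute the post-reduction Jaccard similarities. A close pair in the original instance has $d_H(\mathbf{a},\mathbf{b})=m(T-1)$, which, combined with $|\mathbf{a}|=Tm$ and $|\mathbf{b}|=m$, forces $|\mathbf{a}\cap\mathbf{b}|=m$ and $|\mathbf{a}\cup\mathbf{b}|=Tm$. After applying $g$ the intersection becomes $m+\ell$ and the union becomes $Tm+\ell$, giving Jaccard exactly
\[
t_1' \;:=\; \frac{m+\ell}{Tm+\ell}.
\]
For a far pair we have $|\mathbf{a}\cap\mathbf{b}| \le m/2$, and since the function $c\mapsto (c+\ell)/(Tm+m-c+\ell)$ is monotonically increasing in $c$, the new Jaccard is at most
\[
t_2' \;:=\; \frac{m/2+\ell}{(2T+1)m/2+\ell}.
\]
Thus the reduced instance satisfies the promise with thresholds $t_1', t_2'$: any algorithm distinguishing the two cases on $(\mathcal{A}',\mathcal{B}')$ immediately distinguishes the two cases on $(\mathcal{A},\mathcal{B})$, since $g$ is a bijection whose inverse is free to compute.

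Next I would verify the two inequalities $t_1' > t_2'$ and $t_2'\ge 1-\delta$. The first is a short cross-multiplication that reduces to $m^2/2 > -m\ell$, which is trivially true. For the second, plugging $\ell = Tm(1/\delta-1) = Tm(1-\delta)/\delta$ into $t_2' \ge 1-\delta$ rearranges to
\[
\delta\bigl(\ell + (2T+1)m/2\bigr) \;\ge\; Tm,
\]
and substitution gives $Tm - \delta Tm + \delta(2T+1)m/2 \ge Tm$, i.e.\ $(2T+1)/2 \ge T$, which is just $2T+1 \ge 2T$. So $t_2'\ge 1-\delta$ holds (with equality only in the limit $\delta\to 0$).

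Finally I would note the runtime bookkeeping: the reduction adds $\ell = O(Tm/\delta)$ elements to each of the $2n$ sets, so it runs in time $O(n\ell) = O(n\log n)$ for any fixed $\delta$, hence is dominated by the $\Omega(n^{2-\delta})$ lower bound inherited from Lemma \ref{lem:rubinstances}; the new universe has size $2Tm+\ell = O(\log n)$, preserving the universe-size assumptions needed downstream. The only real obstacle is the arithmetic check that $t_2'\ge 1-\delta$, and as shown above it collapses to a triviality once $\ell$ is substituted; everything else is routine.
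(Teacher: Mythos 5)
Your proof is correct and follows essentially the same route as the paper: exploit the fixed set sizes and the fixed intersection promise from Lemma~\ref{lem:rubinstances}, add $\ell$ common elements, compute the new Jaccard similarities of close and far pairs, and check that the resulting thresholds sit above $1-\delta$. You are somewhat more explicit than the paper in deriving the exact values $t_1'=\frac{m+\ell}{Tm+\ell}$ and $t_2'=\frac{m/2+\ell}{(2T+1)m/2+\ell}$ and in verifying the gap $t_1'>t_2'$, whereas the paper only bounds the close-pair similarity from below via $|\mathbf{a}\cap\mathbf{b}|\ge m/2$, but the underlying argument is identical.
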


\begin{proof}
First, note that if $\mathbf{v}\in \mathcal{A}$, then $g(\mathbf{v})\in \mathcal{A}'$ and similarly if $\mathbf{v}\in \mathcal{B}$ then $g(\mathbf{v})\in \mathcal{B}'$.
We recall that instances of Bichromatic Closest Pair as described in Lemma \ref{lem:rubinstances} are constructed such that all red sets have the same size and all blue sets have the same size. We also have $\max_{\mathbf{q}\in \mathcal{A}\cup \mathcal{B}}\{\vert \mathbf{q}\vert\}=\vert \mathbf{a}\vert$, for any $\mathbf{a}\in \mathcal{A}$, since the sets in $\mathcal{A}$ were larger than the sets in $\mathcal{B}$.
It is easy to see that hardness is preserved under the reduction. 
\medskip

We finally argue that the resulting thresholds are larger than $1-\delta$:
Let $(\mathbf{a},\mathbf{b})$ be any pair from $\mathcal{A}\times \mathcal{B}$ which has Jaccard similarity at least $t_2$ and let $\mathbf{a}'=g(\mathbf{a})$ and $\mathbf{b}'=g(\mathbf{b})$.
We argue that any such pair satisfies $\vert \mathbf{a}\cap \mathbf{b}\vert\ge \frac{\vert \mathbf{b}\vert}{2}$:
Note that with these particular instances of Bichromatic Closest Pair and from the proof of Lemma \ref{lem:basecase}, we have
\[
J(\mathbf{a},\mathbf{b})=\frac{\vert \mathbf{a}\cap \mathbf{b}\vert}{\vert \mathbf{a}\cup \mathbf{b}\vert}=\frac{\vert \mathbf{a}\cap \mathbf{b}\vert}{Tm+m-\vert \mathbf{a}\cap \mathbf{b}\vert}\ge t_2=\frac{t_1}{2}=\frac{1/T}{2}.
\]
Since $\vert \mathbf{b}\vert=m\ge \vert \mathbf{a}\cap \mathbf{b}\vert$, this implies
\[
\vert \mathbf{a}\cap \mathbf{b}\vert\ge \frac{m}{2}+\frac{m}{2T}-\frac{\vert \mathbf{a}\cap \mathbf{b}\vert}{2T}\quad \Rightarrow\quad \vert \mathbf{a}\cap \mathbf{b}\vert \ge m/2=\vert \mathbf{b}\vert/2.
\]
We will consider the Jaccard similarity of $\mathbf{a}'$ and $\mathbf{b}'$:

\begin{align*}
J(\mathbf{a}',\mathbf{b}')&= \frac{\vert \mathbf{a}\cap \mathbf{b}\vert+\vert \mathbf{a}\vert(1/\delta-1)}{\left(\vert \mathbf{a}\vert+\vert \mathbf{a}\vert(1/\delta-1)\right) +\left(\vert \mathbf{b}\vert+\vert \mathbf{a}\vert(1/\delta-1)\right)-\left(\vert \mathbf{a}\cap \mathbf{b}\vert+\vert
\mathbf{a}\vert(1/\delta-1)\right)} \\&=\frac{\vert \mathbf{a}\cap \mathbf{b}\vert+\vert \mathbf{a}\vert(1/\delta-1)}{\vert \mathbf{a}\vert/\delta +\vert \mathbf{b}\vert-\vert \mathbf{a}\cap \mathbf{b}\vert} 
\end{align*}

\medskip
\noindent
By assumption $\vert \mathbf{a}\cap \mathbf{b}\vert\ge \frac{\vert \mathbf{b}\vert}{2}$, so:

\begin{align*}
     \frac{\vert \mathbf{a}\cap \mathbf{b}\vert+\vert \mathbf{a}\vert(1/\delta-1)}{\vert \mathbf{a}\vert/\delta +\vert \mathbf{b}\vert-\vert \mathbf{a}\cap \mathbf{b}\vert}\ge \frac{\vert \mathbf{b}\vert/2+\vert \mathbf{a}\vert(1/\delta-1)}{\vert \mathbf{a}\vert/\delta +\vert \mathbf{b}\vert/2} &\ge 1-\delta\\ \qquad\Leftrightarrow\qquad \frac{\vert \mathbf{b}\vert}{2}+\vert \mathbf{a}\vert(1/\delta-1)&\ge \vert \mathbf{a}\vert(1/\delta-1) +\frac{\vert \mathbf{b}\vert}{2}-\frac{\vert \mathbf{b}\vert\delta}{2}
\end{align*}
\noindent
which is always satisfied.
Hence, $J(\mathbf{a}',\mathbf{b}')\ge 1-\delta$ for any choice of $\delta>0$, and so, we construct an instance where every pair with Jaccard similarity higher than $t_2$ will have Jaccard similarity higher than $1-\delta$. Thus, there are thresholds that are greater than $1-\delta$, that make the constructed instance hard.
\end{proof}

\begin{lemma}
\label{lem:squaringmapping}
Let $0<\delta\le 1$ be given and consider any instance of Bichromatic Closest Pair with Jaccard similarity, $(\mathcal{A},\mathcal{B})$, from a family of instances which require time $\Omega(n^{2-\delta})$ for thresholds $t_1$ and $t_2$.
Using the reduction $f$ defined in Section~\ref{sec:squaringdetails} on each $\mathbf{v}\in \mathcal{A}\cup \mathcal{B}$ for $i$ iterations where $i\ge 1$, we construct a valid instance of Bichromatic Closest Pair with Jaccard similarity with high probability, which requires time $\Omega(n^{2-\delta})$ for thresholds that are decreasing functions of $i$.
\end{lemma}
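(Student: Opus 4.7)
The plan is to derive Lemma~\ref{lem:squaringmapping} as a direct corollary of Lemma~\ref{lem:sumup}. First I would observe that in the context where this lemma is invoked (instances coming ultimately from Lemma~\ref{lem:rubinstances}, possibly pre-processed by Lemma~\ref{lem:deltamapping}), the family $(\mathcal{A},\mathcal{B})$ has fixed red-set size $y$ and fixed blue-set size $z$, so the Jaccard thresholds can be written in the intersection-based form $t_k = x_k/(y+z-x_k)$ for $k\in\{1,2\}$, where $x_k$ is the corresponding intersection size. Applying Lemma~\ref{lem:sumup} then says that after $i\ge 1$ iterations of $f$ the resulting instance still requires time $\Omega(n^{2-\delta})$ for the explicit thresholds
\[
t_1'=\left(\frac{1-\gamma}{1+4\gamma}\right)^{2^i}\frac{x_1^{2^i}}{y^{2^i}+z^{2^i}-x_1^{2^i}}, \qquad t_2'=\left(\frac{1+\gamma}{1-4\gamma}\right)^{2^i}\frac{x_2^{2^i}}{y^{2^i}+z^{2^i}-x_2^{2^i}},
\]
and, via Lemma~\ref{lem:jaccardboundsSampling}, that a solution to the reduced instance yields a correct solution to the original with high probability.

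Two things then remain: verifying that $t_1'$ and $t_2'$ are indeed decreasing in $i$, and confirming that the construction runs in polynomial time so that the conditional lower bound actually transfers. Writing $u=2^i$, the bare ratio $r_k(u)=x_k^{u}/(y^{u}+z^{u}-x_k^{u})=1/((y/x_k)^{u}+(z/x_k)^{u}-1)$ is strictly decreasing in $u$, since $y/x_k,z/x_k>1$ in every non-trivial instance. For $t_1'$ the prefactor $((1-\gamma)/(1+4\gamma))^{u}$ is itself at most one and decreasing, so $t_1'$ decreases without further restriction on $\gamma$. For $t_2'$ the prefactor $((1+\gamma)/(1-4\gamma))^{u}$ exceeds one, so decrease has to be wrung out of the product. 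Since $r_2(u)\le 2(x_2/\max(y,z))^u$ for large $u$, the product behaves (up to bounded factors) like $\bigl(\tfrac{(1+\gamma)x_2}{(1-4\gamma)\max(y,z)}\bigr)^{u}$, and I would fix $\gamma$ small enough (as a function of $\delta$, using $x_2<\max(y,z)$ strictly) that this base is strictly less than $1$; this makes $t_2'$ decrease in $i$ as well. The same $\gamma$ will be used for all invocations of the sampling analysis.

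The main remaining obstacle, and the reason I deliberately postpone the sample-size discussion, is to check that applying $f$ for $i$ iterations keeps the universe size polynomial in $n$, for otherwise an $O(n^{2-\delta})$ algorithm on the reduced instance would not contradict OVC through the original one. Lemma~\ref{lem:sumup} requires $s_j>30\ln(n)\,d^{2^j}/(\gamma^2(1-\gamma)^{2^j-2}x_2^{2^j})$, which rewrites as $s_j=O(\ln(n)(d/x_2)^{2^j}/\gamma^2)$ up to the benign $(1-\gamma)^{2^j-2}$ factor. In the instances of Lemma~\ref{lem:rubinstances} the ratio $d/x_2$ is bounded by a constant depending only on $\delta$ (concretely $O(T)=O(1/\varepsilon)$), so $s_j$ stays polynomial in $n$ whenever $i=O(\log\log n)$, which is precisely the regime the Theorem~\ref{thm:mainthm} proof will need. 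Combining this polynomial-time construction with the union-bound correctness guarantee of Lemma~\ref{lem:jaccardboundsSampling} completes the proof: the hardness of $(\mathcal{A},\mathcal{B})$ for thresholds $t_1,t_2$ transfers to hardness of $(f^i(\mathcal{A}),f^i(\mathcal{B}))$ for the strictly smaller thresholds $t_1',t_2'$, with both $t_1'$ and $t_2'$ decreasing in $i$ as required.
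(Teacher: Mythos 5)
Your proof takes the same route as the paper's: Lemma~\ref{lem:squaringmapping} is derived directly from Lemma~\ref{lem:sumup}, and in fact the paper's own proof is just the single sentence ``The lemma follows immediately from Lemma~\ref{lem:sumup}.'' Your write-up additionally verifies the ``decreasing in $i$'' claim and the polynomial universe size, details the paper leaves implicit (and defers, in part, to the proof of Theorem~\ref{thm:mainthm}); the asymptotic argument for the monotonicity of $t_2'$ is a bit loose but is repaired by the paper's eventual choice $\gamma<\min\{\delta/(20T),\,1/2^{i+1}\}$, so the conclusion is sound.
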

\begin{proof}
The lemma follows immediately from Lemma \ref{lem:sumup}.
\end{proof}

\begin{lemma}
\label{lem:alphamapping}
Let $0<\delta\le 1$ be given and consider any instance of Bichromatic Closest Pair with Jaccard similarity, $(\mathcal{A},\mathcal{B})$, from a family of instances which require time $\Omega(n^{2-\delta})$ for thresholds $t_1$ and $t_2$.
 Define $\ell:=\max_{\mathbf{q}\in \mathcal{A}\cup \mathcal{B}}\{\vert \mathbf{q}\vert\}\cdot(1/\alpha-1)$ and $\mathbf{y}:=\{y_1,...,y_\ell\}$ such that $\mathbf{y}\cap (\mathcal{A}\cup \mathcal{B})=\emptyset$. Define 
mapping $h:\mathcal{A}\rightarrow \mathcal{A}'$ where $\mathcal{A}'=\mathcal{A}\cup Y$ by $h(\mathbf{a})=\mathbf{a}\cup \mathbf{y}$.
The reduction that applies $h$ to every element of $\mathcal{A}$ generates an instance $(\mathcal{A}',\mathcal{B})$ of Bichromatic Closest Pair with Jaccard similarity that requires time $\Omega(n^{2-\delta})$ for some thresholds $t_1',t_2'$.
\end{lemma}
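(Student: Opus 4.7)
The plan is to track how Jaccard similarity transforms under $h$ and argue that the ordering between pairs is preserved, so hardness carries over after appropriately translating the thresholds. Since the $\ell$ added elements $\mathbf{y}$ are disjoint from $\mathcal{A}\cup\mathcal{B}$, they appear in every $h(\mathbf{a})$ but in no blue set. Hence for any $(\mathbf{a},\mathbf{b})\in\mathcal{A}\times\mathcal{B}$ the intersection is unchanged, $|h(\mathbf{a})\cap\mathbf{b}| = |\mathbf{a}\cap\mathbf{b}|$, while the union grows by exactly $\ell$, i.e., $|h(\mathbf{a})\cup\mathbf{b}| = |\mathbf{a}\cup\mathbf{b}|+\ell$.

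Working in the fixed-size regime of Lemma~\ref{lem:sumup}, where all red sets have size $y$, all blue sets have size $z$, and the thresholds have the form $t_k = x_k/(y+z-x_k)$ for intersection sizes $x_k$, the Jaccard similarity of a pair in the constructed instance becomes
\[
J\bigl(h(\mathbf{a}),\mathbf{b}\bigr) = \frac{|\mathbf{a}\cap\mathbf{b}|}{y+z+\ell - |\mathbf{a}\cap\mathbf{b}|},
\]
which is a strictly increasing function of $|\mathbf{a}\cap\mathbf{b}|$. So the natural new thresholds are $t_k' = x_k/(y+z+\ell - x_k)$, and the ``yes'' and ``no'' cases under $(t_1,t_2)$ map one-to-one to ``yes'' and ``no'' cases under $(t_1',t_2')$. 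A solver for the new instance therefore yields a solver for the original, and the $\Omega(n^{2-\delta})$ lower bound is inherited deterministically; unlike in the squaring-and-sampling step, no probabilistic error has to be absorbed.

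Next I would quantify how much the thresholds shrink. Since $\max_{\mathbf{q}\in\mathcal{A}\cup\mathcal{B}}|\mathbf{q}| = y$ in the hard instances from Lemma~\ref{lem:rubinstances} (the red sets are the larger side, and this is preserved by the preceding reductions), we have $\ell = y(1/\alpha - 1)$ and therefore $y + \ell = y/\alpha$. When $z$ is small compared with $y$---which is the regime reached after a suitable number of squaring-and-sampling rounds---the effect is essentially to multiply each threshold by $\alpha$, giving $t_k' \approx \alpha\, t_k$. This is the fine-grained knob needed in the proof of Theorem~\ref{thm:mainthm} to bring the upper threshold exactly down to $j_1$ after the other two reductions have been applied.

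I do not expect a substantive obstacle here. The only care required is verifying that $\alpha \in (0,1]$ so that $\ell \ge 0$ (and that $\ell$ is, say, rounded up to an integer, which only perturbs the thresholds by a negligible amount), and confirming that the new intersection sizes $x_1, x_2$ do not exceed the enlarged union size so that $t_k' \le 1$; both follow immediately from $x_1 \le y+z < y+z+\ell$. The multiplicative $\alpha$ approximation is accurate enough for the calibration in Theorem~\ref{thm:mainthm} because what ultimately matters is the gap between $t_1'$ and $t_2'$, which is controlled by the prior squaring step, not the absolute scale reached by this last reduction.
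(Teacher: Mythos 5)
Your proposal is correct and follows the same approach the paper takes; the paper's own proof is a single sentence ("Clearly, hardness is preserved...") and your write-up simply spells out the details the authors consider obvious: the added elements $\mathbf{y}$ leave intersections unchanged while inflating every red set (and hence every union) by exactly $\ell$, so in the fixed-set-size regime the new Jaccard value is a strictly increasing function of $|\mathbf{a}\cap\mathbf{b}|$ alone, yes/no instances map to yes/no instances under translated thresholds, and the reduction is deterministic and cheap. Your side remarks on the range of $\alpha$, integrality of $\ell$, and the $t_k'\approx\alpha t_k$ calibration anticipate what the paper does in its appendix calculations (where, for these particular instances, $x_1=z$ makes the upper threshold scaling exactly $\alpha$).
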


\begin{proof}
Clearly, hardness is preserved under the reduction that simply adds new elements to all red sets. In particular this reduction decreases the thresholds by decreasing the similarity between red and blue pairs.
\end{proof}

\subsection{Proof outline for Theorem \ref{thm:mainthm}}
\label{sec:theproof}
\begin{proof}
 \arxiv{For simplicity and readability we leave out most of the calculations --- details can be found in Appendix \ref{app:computations}.}
 \final{For simplicity and readability we leave out most of the calculations --- details can be found in Appendix B in the full version on ArXiv \cite[App.~B]{arxiveversion}.}

Let $\delta>0$ be given and let $j_1,j_2$ be given such that $j_2<j_1<1-\delta$. 
Take any instance of Bichromatic Closest Pair with Jaccard similarity satisfying the properties described in Lemma \ref{lem:rubinstances}. Recall from this lemma that $T=O\left(\frac{1}{\varepsilon}\right)$. 

Apply the reductions from first Lemma \ref{lem:deltamapping} to achieve an instance, which requires time $\Omega(n^{2-\delta})$ for thresholds greater than $1-\delta$. 
We wish to reduce to an instance that is hard for smaller thresholds $j_1$ and $j_2$. The reduction from Lemma \ref{lem:squaringmapping} is used to decrease the thresholds, where we pick the largest $i$, such that the resulting upper threshold $t_1$ is no smaller than $j_1$, i.e., $t_1\ge j_1$. This reduction decreases the thresholds until the upper threshold is only slightly greater than $j_1$. Now, let $\alpha=\frac{j_1}{t_1}$ and apply the reduction from Lemma \ref{lem:alphamapping} to ensure that the resulting upper threshold is now equal to $j_1$.
This eventually gives an instance of Bichromatic Closest Pair with Jaccard similarity, which cannot be solved in time $O(n^{2-\delta})$ for thresholds 
\begin{align*}
t_1'&=\alpha\left(\frac{1-\gamma}{1+4\gamma}\right)^{2^i}\left(\frac{\delta}{T}+1-\delta\right)^{2^i}\\ t_2'&=\left(\frac{1+\gamma}{1-4\gamma}\right)^{2^i}\frac{\left(\frac{\delta}{2T}+1-\delta\right)^{2^i}}{\frac{1}{\alpha}+\left(\frac{\delta}{T}+1-\delta\right)^{2^i}-\left(\frac{\delta}{2T}+1-\delta\right)^{2^i}}
\end{align*}
where we observe that by construction $t_1'=\alpha\cdot t_1=j_1$. \arxiv{We refer to Appendix \ref{app:computations} for the calculations.}
\final{We refer to Appendix B in the full version on ArXiv for the calculations \cite[App.~B]{arxiveversion}.}
So we have constructed an instance which is hard for thresholds $j_1$ and $t_2'$. 
 
 \medskip

Set $t_2^*=\left(\frac{1+\gamma}{1-4\gamma}\right)^{2^i}\left(\frac{\delta}{2T}+1-\delta\right)^{2^i}$. Then $t_2'<\alpha t_2^*$ and so the hardness for $t_1'=j_1$ and $t_2'$ implies hardness for $t_1'=j_1$ and $\alpha t_2^*$. 
We show that there is an $\varepsilon$ that only depends on $\delta$ such
that $\alpha t_2^*<j_2$. Then the hardness for $t_1'=j_1$ and $\alpha t_2^*$ implies hardness for the given $j_1$ and $j_2$.
\medskip

Note that we have chosen $\alpha \ge t_1$, since otherwise $i$ could not be maximal.  
So we have:

\begin{align*}
    \frac{\log (j_1)}{\log \left(\alpha t_2^*\right)}=\frac{\log\left(\alpha t_1\right)}{\log\left(\alpha t_2^*\right)}&\le \frac{\log\left( t_1^2\right)}{\log\left(t_1\cdot  \left(\frac{1+\gamma}{1-4\gamma}\right)^{2^i}\cdot\left(\frac{\delta}{2T}+1-\delta\right)^{2^i}\right)}\\&=\frac{2^i\cdot\log\left(\left(\frac{1-\gamma}{1+4\gamma}\right)^2\cdot\left(\delta/T+1-\delta\right)^2\right)}{2^i\cdot\log\left(\left(\frac{1-\gamma}{1+4\gamma}\right)\cdot\left(\delta/T+1-\delta\right) \left(\frac{1+\gamma}{1-4\gamma}\right)\cdot\left(\frac{\delta}{2T}+1-\delta\right)\right)}.
\end{align*}
We need to show that this expression is bounded by $1-\varepsilon$ for some $\varepsilon$ that depends on $\delta$, but not on $j_1$ and $j_2$. Observe that the factors $2^i$ cancel out and we may pick $\gamma$ small enough that it can essentially be ignored. \arxiv{We show in Appendix \ref{app:computations}}\final{We show in Appendix B in the full version on ArXiv \cite[App.~B]{arxiveversion}} that we can use any $\gamma<\min\left\{\frac{1}{2^{i+1}},\frac{\delta}{20T}\right\}$.
Then for given $\delta$, there exists an $\varepsilon$ such that the expression is bounded by $1-\varepsilon$, since $T$ can be considered a constant for a fixed $\delta$. Recall that $T$ was defined in Lemma \ref{lem:rubinstances}.
By the assumption $j_1\le j_2^{1-\varepsilon}$ we then have $\alpha t_2^*<j_2$. Then the hardness of $t_1'$ and $\alpha t_2^*$
where $t_1'=j_1$ and $\alpha t_2^*<j_2$, implies the desired hardness for the given $j_1$ and $j_2$.

\medskip
We finally argue about the size of the universe of the instance constructed by the compositions of reductions described. In the following, $d$ is the size of the universe of the initial instance of Bichromatic Closest Pair with Jaccard instance. In the proof of Lemma~\ref{lem:sumup}, we argued that we could use $x_2$, which was the size of the intersection for a pair with Jaccard similarity $j_2$, in the sample size $s_i$, which means that
            \begin{align*}
            s_i&\ge\frac{30\ln(n)d^{2^i}}{\gamma^2(1-\gamma)^{2^i}x_2^{2^i}}=\frac{30\ln(n)d^{2^i}}{\gamma^2(1-\gamma)^{2^i}(j_2(\vert a\vert +\vert b\vert-x_2))^{2^i}}\\&=\frac{30\ln(n)}{\gamma^2(1-\gamma)^{2^i}j_2^{2^i}}\cdot\left(\frac{\delta+1}{1 +\frac{\delta}{2T}}\right)^{2^i}.
            \end{align*}
            Again, the calculations can be found in \arxiv{Appendix \ref{app:computations}.} \final{Appendix B in the full version on ArXiv \cite[App.~B]{arxiveversion}.}
            Hence, the sets constructed by the composition of reductions come from a universe whose size is bounded by
            \[
            \vert U\vert \le s_i+s_i(1/\alpha-1)=\frac{s_i}{\alpha}\le\frac{30\ln(n)}{\gamma^2j_2^{2^i}}\left(\frac{\delta+1}{\left(\frac{\delta}{T}+1-\delta\right)\left(\frac{\delta}{2T}+1\right)}\right)^{2^i}\left(\frac{1+4\gamma}{(1-\gamma)^2}\right)^{2^i}
            \]
            By Assumption $t_1'^2<j_1\le t_1'$, which implies that $2^i=O\left(\frac{\log j_1}{\log c}\right)=O\left(\log \frac{1}{j_1}\right)$ for constant $c<1$. Hence, we conclude that the size of the universe is $\ln(n) / j_2^{O(\log 1/j_1)}$.
This finishes the proof of Theorem \ref{thm:mainthm}.
\end{proof}

\section{Final Comments}
\label{sec:finalcomments}
On a final note, we remark that one can obtain a result similar to Theorem \ref{thm:mainthm} for Braun-Blanquet similarity. Recall that we define Braun-Blanquet similarity for a pair of sets $(\mathbf{a},\mathbf{b})\in \mathcal{A}\times \mathcal{B}$ as 
\[
BB(\mathbf{a},\mathbf{b})=\frac{\vert \mathbf{a}\cap \mathbf{b}\vert}{\max\big\{\vert \mathbf{a}\vert,\vert \mathbf{b}\vert\big\}}\in[0,1]
\]
In fact, the proof is slightly simpler than the one given in Section~\ref{sec:theproof} and the calculations are somewhat nicer. The proof ideas, i.e., the choice and order of reductions, are exactly the same and should be easy to carry out by following the structure of the proof of Theorem \ref{thm:mainthm}.

\medskip

The main open problem we leave is whether existing upper bounds are near-optimal when~$\varepsilon$ is an arbitrary constant between 0 and 1. Our techniques only work when $\varepsilon$ is sufficiently small.



\bibliography{lipics-v2019-sample-article}

\begin{thebibliography}{10}

\bibitem{broder1997resemblance}
Andrei~Z Broder.
\newblock On the resemblance and containment of documents.
\newblock In {\em Compression and complexity of sequences 1997. proceedings},
  pages 21--29. IEEE, 1997.

\bibitem{chen2018hardness}
Lijie Chen.
\newblock On the hardness of approximate and exact (bichromatic) maximum inner
  product.
\newblock In {\em 33rd Computational Complexity Conference, {CCC} 2018, June
  22-24, 2018, San Diego, CA, {USA}}, pages 14:1--14:45, 2018.
\newblock URL: \url{https://doi.org/10.4230/LIPIcs.CCC.2018.14}, \href
  {http://dx.doi.org/10.4230/LIPIcs.CCC.2018.14}
  {\path{doi:10.4230/LIPIcs.CCC.2018.14}}.

\bibitem{chen2018equivalence}
Lijie Chen and Ryan Williams.
\newblock An equivalence class for orthogonal vectors.
\newblock In {\em Proceedings of the Thirtieth Annual {ACM-SIAM} Symposium on
  Discrete Algorithms, {SODA} 2019, San Diego, California, USA, January 6-9,
  2019}, pages 21--40, 2019.
\newblock URL: \url{https://doi.org/10.1137/1.9781611975482.2}, \href
  {http://dx.doi.org/10.1137/1.9781611975482.2}
  {\path{doi:10.1137/1.9781611975482.2}}.

\bibitem{christiani2017set}
Tobias Christiani and Rasmus Pagh.
\newblock Set similarity search beyond minhash.
\newblock In {\em Proceedings of the 49th Annual {ACM} {SIGACT} Symposium on
  Theory of Computing, {STOC} 2017, Montreal, QC, Canada, June 19-23, 2017},
  pages 1094--1107, 2017.
\newblock URL: \url{https://doi.org/10.1145/3055399.3055443}, \href
  {http://dx.doi.org/10.1145/3055399.3055443}
  {\path{doi:10.1145/3055399.3055443}}.

\bibitem{goel2013discovering}
Ashish Goel, Aneesh Sharma, Dong Wang, and Zhijun Yin.
\newblock Discovering similar users on twitter.
\newblock In {\em 11th Workshop on Mining and Learning with Graphs}, 2013.

\bibitem{gupta2013wtf}
Pankaj Gupta, Ashish Goel, Jimmy~J. Lin, Aneesh Sharma, Dong Wang, and Reza
  Zadeh.
\newblock {WTF:} the who to follow service at twitter.
\newblock In {\em 22nd International World Wide Web Conference, {WWW} '13, Rio
  de Janeiro, Brazil, May 13-17, 2013}, pages 505--514, 2013.
\newblock URL: \url{https://doi.org/10.1145/2488388.2488433}, \href
  {http://dx.doi.org/10.1145/2488388.2488433}
  {\path{doi:10.1145/2488388.2488433}}.

\bibitem{indyk1998approximate}
Piotr Indyk and Rajeev Motwani.
\newblock Approximate nearest neighbors: Towards removing the curse of
  dimensionality.
\newblock In {\em Proceedings of the Thirtieth Annual {ACM} Symposium on the
  Theory of Computing, Dallas, Texas, USA, May 23-26, 1998}, pages 604--613,
  1998.
\newblock URL: \url{https://doi.org/10.1145/276698.276876}, \href
  {http://dx.doi.org/10.1145/276698.276876} {\path{doi:10.1145/276698.276876}}.

\bibitem{rubinstein2018hardness}
Aviad Rubinstein.
\newblock Hardness of approximate nearest neighbor search.
\newblock In {\em Proceedings of the 50th Annual {ACM} {SIGACT} Symposium on
  Theory of Computing, {STOC} 2018, Los Angeles, CA, USA, June 25-29, 2018},
  pages 1260--1268, 2018.
\newblock URL: \url{https://doi.org/10.1145/3188745.3188916}, \href
  {http://dx.doi.org/10.1145/3188745.3188916}
  {\path{doi:10.1145/3188745.3188916}}.

\bibitem{valiant2015finding}
Gregory Valiant.
\newblock Finding correlations in subquadratic time, with applications to
  learning parities and the closest pair problem.
\newblock {\em J. {ACM}}, 62(2):13:1--13:45, 2015.
\newblock URL: \url{https://doi.org/10.1145/2728167}, \href
  {http://dx.doi.org/10.1145/2728167} {\path{doi:10.1145/2728167}}.

\bibitem{Williams18}
Virginia~Vassilevska Williams.
\newblock Some open problems in fine-grained complexity.
\newblock {\em {SIGACT} News}, 49(4):29--35, 2018.
\newblock URL: \url{https://doi.org/10.1145/3300150.3300158}, \href
  {http://dx.doi.org/10.1145/3300150.3300158}
  {\path{doi:10.1145/3300150.3300158}}.

\end{thebibliography}

\arxiv{\appendix
\section{Bounds on Jaccard similarity after squaring-and-sampling reduction}
\label{app:jaccardbounds}
We show that the bounds given in Section~\ref{sec:jaccbounds} hold. We want to show the upper bound
\begin{align}
\label{ineq:jaccboundUpper}
\frac{(1+\gamma)^{2^i}\vert \mathbf{a}\cap \mathbf{b}\vert^{2^i}}{(1-\gamma)^{2^i}\left(\vert \mathbf{a}\vert^{2^i}+\vert \mathbf{b}\vert^{2^i}\right)-(1+\gamma)^{2^i}\vert \mathbf{a}\cap \mathbf{b}\vert^{2^i}}\le\frac{(1+\gamma)^{2^i}\vert \mathbf{a}\cap \mathbf{b}\vert^{2^i}}{(1-4\gamma)^{2^i}\left(\vert \mathbf{a}\vert^{2^i}+\vert \mathbf{b}\vert^{2^i}-\vert \mathbf{a}\cap \mathbf{b}\vert^{2^i}\right)} 
\end{align}
and similarly the lower bound
\begin{align}
\label{ineq:jaccboundLower}
\frac{(1-\gamma)^{2^i}\vert \mathbf{a}\cap \mathbf{b}\vert^{2^i}}{(1+\gamma)^{2^i}\left(\vert \mathbf{a}\vert^{2^i}+\vert \mathbf{b}\vert^{2^i}\right)-(1-\gamma)^{2^i}\vert \mathbf{a}\cap \mathbf{b}\vert^{2^i}}\ge\frac{(1-\gamma)^{2^i}\vert \mathbf{a}\cap \mathbf{b}\vert^{2^i}}{(1+4\gamma)^{2^i}\left(\vert \mathbf{a}\vert^{2^i}+\vert \mathbf{b}\vert^{2^i}-\vert \mathbf{a}\cap \mathbf{b}\vert^{2^i}\right)} .
\end{align}
Observe that (\ref{ineq:jaccboundUpper}) is true when
\begin{align*}
(1-\gamma)^{2^i}\vert \mathbf{a}\vert^{2^i}+(1-\gamma)^{2^i}\vert \mathbf{b}\vert^{2^i}-(1+\gamma)^{2^i}\vert \mathbf{a}\cap \mathbf{b}\vert^{2^i}&\ge (1-4\gamma)^{2^i}\left(\vert \mathbf{a}\vert^{2^i}+\vert \mathbf{b}\vert^{2^i}-\vert \mathbf{a}\cap \mathbf{b}\vert^{2^i}\right)\\
\left((1-\gamma)^{2^i}-(1-4\gamma)^{2^i}\right)\left(\vert \mathbf{a}\vert^{2^i}+\vert \mathbf{b}\vert^{2^i}\right)&\ge \left((1+\gamma)^{2^i}-(1-4\gamma)^{2^i}\right)\vert \mathbf{a}\cap \mathbf{b}\vert^{2^i}\\
\vert \mathbf{a}\vert^{2^i}+\vert \mathbf{b}\vert^{2^i}&\ge \frac{(1+\gamma)^{2^i}-(1-4\gamma)^{2^i}}{(1-\gamma)^{2^i}-(1-4\gamma)^{2^i}}\cdot \vert \mathbf{a}\cap \mathbf{b}\vert^{2^i}
\end{align*}
which in particular holds if (note that this is a loose bound)
\begin{align*}
\frac{(1+\gamma)^{2^i}-(1-4\gamma)^{2^i}}{(1-\gamma)^{2^i}-(1-4\gamma)^{2^i}}\le 2\qquad \Leftrightarrow\qquad
(1+\gamma)^{2^i}\le 2(1-\gamma)^{2^i}-(1-4\gamma)^{2^i}
\end{align*}
Similarly we see that (\ref{ineq:jaccboundLower}) is true when
\begin{align*}
(1+\gamma)^{2^i}\left(\vert \mathbf{a}\vert^{2^i}+\vert \mathbf{b}\vert^{2^i}\right)-(1-\gamma)^{2^i}\vert \mathbf{a}\cap \mathbf{b}\vert^{2^i}&\le (1+4\gamma)^{2^i}\left(\vert \mathbf{a}\vert^{2^i}+\vert \mathbf{b}\vert^{2^i}-\vert \mathbf{a}\cap \mathbf{b}\vert^{2^i}\right)\\
\left((1+4\gamma)^{2^i}-(1-\gamma)^{2^i}\right)\vert \mathbf{a}\cap \mathbf{b}\vert^{2^i}&\le \left((1+4\gamma)^{2^i}-(1+\gamma)^{2^i}\right)\left(\vert \mathbf{a}\vert^{2^i}+\vert \mathbf{b}\vert^{2^i}\right)\\
\frac{(1+4\gamma)^{2^i}-(1-\gamma)^{2^i}}{(1+4\gamma)^{2^i}-(1+\gamma)^{2^i}}\cdot\vert \mathbf{a}\cap \mathbf{b}\vert^{2^i}&\le \vert \mathbf{a}\vert^{2^i}+\vert \mathbf{b}\vert^{2^i}
\end{align*}
which in particular holds if
\begin{align*}
\frac{(1+4\gamma)^{2^i}-(1-\gamma)^{2^i}}{(1+4\gamma)^{2^i}-(1+\gamma)^{2^i}}&\le 2\qquad \Leftrightarrow\qquad
2(1+\gamma)^{2^i}\le (1+4\gamma)^{2^i}+(1-\gamma)^{2^i}
\end{align*}
So we want to ensure that for any choice of $i$ there exists a $\gamma$ which satisfies
\begin{align}
\label{ineq:top}
(1+\gamma)^{2^i}&\le 2(1-\gamma)^{2^i}-(1-4\gamma)^{2^i}\\
\label{ineq:bottom}
    2(1+\gamma)^{2^i}&\le (1+4\gamma)^{2^i}+(1-\gamma)^{2^i}
\end{align}
Let's begin with (\ref{ineq:top}): we see that we can bound the right-hand side as follows using Bernoulli's inequality
\begin{align*}
   (1+\gamma)^{2^i}&\le2(1-2^i\gamma)-(1-4\cdot 2^i\cdot\gamma) <  2(1-\gamma)^{2^i}-(1-4\gamma)^{2^i}
   \end{align*}
   and so (\ref{ineq:jaccboundUpper}) holds if we can show that 
   \begin{align*}
   (1+\gamma)^{2^i}&\le 2(1-2^i\gamma)-(1-4\cdot 2^i\cdot\gamma)=2-2\cdot 2^{i}\gamma-1+4\cdot 2^i\cdot\gamma=1+ 2^{i+1}\gamma.
\end{align*}

Now let's consider (\ref{ineq:bottom}): as before, we can bound the right-hand side by
\begin{align*}
   2(1+\gamma)^{2^i}&\le (1+4\cdot 2^i\cdot \gamma)+(1-2^i\cdot \gamma)\le  (1+4\gamma)^{2^i}+(1-\gamma)^{2^i}
\end{align*}
  and so (\ref{ineq:jaccboundLower}) holds if we can show that
\begin{align}
\label{ineq:pickinggamma}
   (1+\gamma)^{2^i}\le 1+\frac{3}{2}\cdot 2^i\cdot \gamma 
\end{align}
Note that both (\ref{ineq:jaccboundUpper}) and (\ref{ineq:jaccboundLower}) are satisfied if $(1+\gamma)^{2^i}\le 1+\frac{3}{2}\cdot 2^i\cdot \gamma$, which holds for $\gamma<\frac{1}{2^{i+1}}$ for every choice of $i$.

\section{Proof details for Theorem \ref{thm:mainthm}}
\label{app:computations}
We here give the calculations at each step of the proof of Theorem \ref{thm:mainthm}. Given $\delta$, we construct instances of Bichromatic Closest Pair with Jaccard similarity which require time $\Omega(n^{2-\delta})$ for certain thresholds - we will mainly give the thresholds after each reduction, as the ultimate goal is to show that there exists $\varepsilon$ such that the constructed instance is hard for any choice of $j_1$ and $j_2$ with $j_1\le j_2^{1-\varepsilon}$. Hence we construct instances which preserve hardness under varying thresholds and finally achieve an instance, where we can find $\varepsilon$, and thus that we can ensure hardness for any given thresholds $j_1$ and $j_2$ which satisfy $j_1\le j_2^{1-\varepsilon}$. Each of the thresholds are indexed by a number (1 for upper thresholds and 2 for lower threshold) and a letter (d the thresholds after the reduction adding common elements to all sets, s for the thresholds after the squaring reduction and a for the thresholds after adding elements to the red sets only).

\medskip
Let $\delta>0$ be given. Let $j_2<j_1<1-\delta$ be given.
\begin{enumerate}
    \item Let $j_{01}$ and $j_{02}$ be as in the proof of Lemma \ref{lem:basecase}. Then Bichromatic Closest Pair with Jaccard similarity cannot be solved in time $O(n^{2-\delta})$ for thresholds $j_{01}$ and $j_{02}$. 
    We will preserve this hardness under a series of reductions.
    \item Now add $\max_{\mathbf{x}\in \mathcal{A}\cup \mathcal{B}}\{\vert \mathbf{x}\vert\}(1/\delta-1)$ common values to all sets  in $\mathcal{A}$ and $\mathcal{B}$. Since all sets in $\mathcal{A}$ have size $Tm$ and all sets in $\mathcal{B}$ have size $m$, we get an instance, which requires time $\Omega(n^{2-\delta})$ for thresholds
        \begin{align*}
            j_{1d}&=\frac{m+Tm(1/\delta-1)}{Tm+Tm(1/\delta-1)+m+Tm(1/\delta-1)-\left(m+Tm(1/\delta-1)\right)}\\&=\frac{m+Tm/\delta-Tm}{Tm+Tm/\delta-Tm}=\frac{\delta(1+T/\delta-T)}{T}=\delta/T+1-\delta>1-\delta\\
            j_{2d}&=\frac{m/2+Tm(1/\delta-1)}{Tm+Tm(1/\delta-1)+m+Tm(1/\delta-1)-\left(m/2+Tm(1/\delta-1)\right)}\\&=\frac{m/2+Tm/\delta-Tm}{Tm/\delta+m+Tm/\delta-Tm-m/2-Tm/\delta+Tm}\\&=\frac{\frac{1}{2T}+1/\delta-1}{1/\delta+\frac{1}{2T}}=\frac{\frac{\delta}{2T}+1-\delta}{1+\frac{\delta}{2T}}>1-\delta
        \end{align*}
        Note that both thresholds are greater than $1-\delta$.
        Clearly, Bichromatic Closest Pair with Jaccard similarity still requires time $O(n^{2-\delta})$ for thresholds $j_{1d}$ and $j_{2d}$.
    \item We now use the squaring-and-sampling reduction on each set in the current instance to reduce the Jaccard similarity between all pairs of sets. We let the thresholds be as follows where $i$ is maximal such that $j_{1s}\ge j_1$. In order to satisfy (\ref{ineq:pickinggamma}) we require that $\gamma \le 1/2^{i+1}$.
    \begin{align*}
        j_{1s}&=\frac{\left(\frac{1-\gamma}{1+4\gamma}\right)^{2^i}\left(m+Tm(1/\delta-1)\right)^{2^i}}{\left(Tm+Tm(1/\delta-1)\right)^{2^i}+\left(m+Tm(1/\delta-1)\right)^{2^i}-\left(m+Tm(1/\delta-1)\right)^{2^i}}\\
        &=\left(\frac{1-\gamma}{1+4\gamma}\right)^{2^i}\frac{\left(m+Tm/\delta-Tm\right)^{2^i}}{\left(Tm/\delta\right)^{2^i}}=\left(\frac{1-\gamma}{1+4\gamma}\right)^{2^i}\left(\delta/T+1-\delta\right)^{2^i}\\
        j_{2s}&=\frac{\left(\frac{1+\gamma}{1-4\gamma}\right)^{2^i}\left(m/2+Tm(1/\delta-1)\right)^{2^i}}{\left(Tm+Tm(1/\delta-1)\right)^{2^i}+\left(m+Tm(1/\delta-1)\right)^{2^i}-\left(m/2+Tm(1/\delta-1)\right)^{2^i}}\\&=\left(\frac{1+\gamma}{1-4\gamma}\right)^{2^i}\frac{\left(1/2+T/\delta-T\right)^{2^i}}{\left(T/\delta\right)^{2^i}+\left(1+T/\delta-T\right)^{2^i}-\left(1/2+T/\delta-T\right)^{2^i}}\\&=\left(\frac{1+\gamma}{1-4\gamma}\right)^{2^i}\frac{\left(\frac{1}{2T}+1/\delta-1\right)^{2^i}}{\left(1/\delta\right)^{2^i}+\left(1/T+1/\delta-1\right)^{2^i}-\left(\frac{1}{2T}+1/\delta-1\right)^{2^i}}.
    \end{align*}
    Hence, using the squaring-and-sampling reduction, we constructed an instance of Bichromatic Closest Pair with Jaccard similarity, which requires time $\Omega(n^{2-\delta})$ for thresholds $j_{1s}$ and $j_{2s}$. Note that $j_{1s}$ is only slightly larger than $j_1$. The next step will decrease the upper threshold to become equal to $j_1$, which will allow us to argue about $\varepsilon$.
        \item The final reduction adds $\max_{\mathbf{v}\in \mathcal{A}'\cup \mathcal{B}'}\{\vert \mathbf{v}\vert\}(1/\alpha-1)$ to all red sets, where we let $\alpha=\frac{j_1}{j_{1s}}$, since then $j_1=j_{1a}=\alpha j_{1s}$. The number of elements that we add to the red sets ensures that we get an instance of Bichromatic Closest Pair, which is hard for upper threshold $j_{1a}=\alpha j_{1s}=j_1$ and some lower threshold, $j_{2a}$.
        \begin{align*}
        j_{1a}&
        =\frac{\left(\frac{1-\gamma}{1+4\gamma}\right)^{2^i}\left(m+Tm/\delta-Tm\right)^{2^i}}{\left(Tm/\delta\right)^{2^i}+(Tm/\delta)^{2^i}(1/\alpha-1)+\left(m+Tm/\delta-Tm\right)^{2^i}-\left(m+Tm/\delta-Tm\right)^{2^i}}
        \\&=\left(\frac{1-\gamma}{1+4\gamma}\right)^{2^i}\alpha\frac{\left(m+Tm/\delta-Tm\right)^{2^i}}{(Tm/\delta)^{2^i}}\\&=\alpha\left(\frac{1-\gamma}{1+4\gamma}\right)^{2^i}\left(\delta/T+1-\delta\right)^{2^i}=\alpha j_{1s}\\
        j_{2a}&
        =\frac{\left(\frac{1+\gamma}{1-4\gamma}\right)^{2^i}\left(m/2+Tm/\delta-Tm\right)^{2^i}}{\left(Tm/\delta\right)^{2^i}+(Tm/\delta)^{2^i}(1/\alpha-1)+\left(m+Tm/\delta-Tm\right)^{2^i}-\left(m/2+Tm/\delta-Tm\right)^{2^i}}
        \\&=\left(\frac{1+\gamma}{1-4\gamma}\right)^{2^i}\frac{\left(\frac{1}{2T}+1/\delta-1\right)^{2^i}}{\frac{(1/\delta)^{2^i}}{\alpha}+\left(\frac{1}{T}+1/\delta-1\right)^{2^i}-\left(\frac{1}{2T}+1/\delta-1\right)^{2^i}}.
        \end{align*}
       Observe further that
        \begin{align*}
        j_{2a}<\alpha\left(\frac{1+\gamma}{1-4\gamma}\right)^{2^i}\left(\frac{\delta}{2T}+1-\delta\right)^{2^i}= \alpha j_2^*
        \end{align*}
        So Bichromatic Closest Pair with thresholds $j_1$ and $\alpha j_2^*$ still requires time $\Omega(n^{2-\delta})$, for a well-chosen $\gamma$. A simple calculation shows that any gamma $\gamma<\frac{\delta}{20T}$ suffices. Hence, we choose any $\gamma<\min\left\{\frac{\delta}{20T},\frac{1}{2^{i+1}}\right\}$ in order to also satisfy (\ref{ineq:pickinggamma}).
        \item We will now find $\varepsilon$ such that $j_1>\alpha j_2^{*1-\varepsilon}$ and so, by the assumption that $j_1\le j_2^{1-\varepsilon}$, we conclude that $j_2>j_2^{*}$. Hence, the constructed instance of Bichromatic Closest Pair is also hard for thresholds $j_1$ and $j_2$.
        
        Note that $\alpha\ge j_{1s}$ --- otherwise $\alpha=\frac{j_1}{j_{1s}}< j_{1s}$ and so $j_1<j_{1s}^2$, which contradicts the assumption that $i$ was maximal. We observe that there exists an $\varepsilon>0$ such that 
        \begin{align*}
            \frac{\log j_1}{\log \left(\alpha j_{2}^*\right)}&=\frac{\log\left(\alpha j_{1s}\right)}{\log\left(\alpha\left(\frac{1+\gamma}{1-4\gamma}\right)^{2^i}\left(\frac{\delta}{2T}+1-\delta\right)^{2^i}\right)}\\
            &<\frac{\log\left( j_{1s}^2\right)}{\log\left(j_{1s}\left(\frac{1+\gamma}{1-4\gamma}\right)^{2^i}\left(\frac{\delta}{2T}+1-\delta\right)^{2^i}\right)}\\
            &=\frac{\log\left( \left(\left(\frac{1-\gamma}{1+4\gamma}\right)(\delta/T+1-\delta)\right)^2\right)}{\log\left(\left(\frac{1-\gamma}{1+4\gamma}\right)\left(\frac{\delta}{T}+1-\delta\right)\left(\frac{1+\gamma}{1-4\gamma}\right)\left(\frac{\delta}{2T}+1-\delta\right)\right)}\\
            &=\frac{2 \log\left(\left(\frac{1-\gamma}{1+4\gamma}\right)(\delta/T+1-\delta)\right)}{\log\left(\left(\frac{1-\gamma}{1+4\gamma}\right)\left(\frac{\delta}{T}+1-\delta\right)\left(\frac{1+\gamma}{1-4\gamma}\right)\left(\frac{\delta}{2T}+1-\delta\right)\right)}\\
            &=\frac{\log\left(\left(\frac{1-\gamma}{1+4\gamma}\right)(\delta/T+1-\delta)\right)^2}{ \log\left(\left(\frac{1-\gamma}{1+4\gamma}\right)\left(\frac{\delta}{T}+1-\delta\right)\left(\frac{1+\gamma}{1-4\gamma}\right)\left(\frac{\delta}{2T}+1-\delta\right)\right)} = 1-\varepsilon \enspace .
        \end{align*}
        For $\gamma<\min\left\{\frac{\delta}{20T},\frac{1}{2^{i+1}}\right\}$ we have that $\varepsilon = \Theta(1/T)$.

       Since $\frac{\log j_1}{\log \left(\alpha j_{2}^*\right)}<1-\varepsilon$ and $\frac{\log j_1}{\log j_2}\ge 1-\varepsilon$ by assumption, we have $\alpha j_{2}^*<j_2$. So the hardness of $t_1'$ and $\alpha t_2^*$ implies hardness for $j_1$ and $j_2$ as we wanted.
    \end{enumerate}

    We finally discuss the size of the universe. This size depends on the number of elements sampled in the squaring-and-sampling reduction. Recall that we use the squaring-and-sampling reduction on an instance of Bichromatic Closest Pair, where we have already added common elements to all sets in $\mathcal{A}$ and $\mathcal{B}$ and which is hard for thresholds $j_{1d}$ and $j_{2d}$.
    
    Let $x_2$ be the size of the intersection between a red and a blue set that have Jaccard similarity $j_{2d}$. Note that we can easily compute \begin{align*}
        x_2&=\frac{\vert \mathbf{a}\vert +\vert \mathbf{b}\vert - d_H(\mathbf{a},\mathbf{b})}{2}\\&=\frac{(Tm+Tm(1/\delta-1))+(m+Tm(1/\delta-1))-Tm}{2}\\&=\frac{m}{2}+Tm(1/\delta-1)
    \end{align*}
    We recall from Lemma \ref{lem:sumup} that the sample size should be
    \begin{align*}
            s_i&\ge\frac{30\ln(n)d^{2^i}}{\gamma^2(1-\gamma)^{2^i}x_2^{2^i}}=\frac{30\ln(n)d^{2^i}}{\gamma^2(1-\gamma)^{2^i}(j_2(\vert \mathbf{a}\vert +\vert \mathbf{b}\vert-x_2))^{2^i}}\\&=\frac{30\ln(n)}{\gamma^2(1-\gamma)^{2^i}j_2^{2^i}}\left(\frac{2Tm+Tm(1/\delta-1)}{(Tm+Tm(1/\delta-1)) +(m+Tm(1/\delta-1))-(m/2+Tm(1/\delta-1))}\right)^{2^i}\\&=\frac{30\ln(n)}{\gamma^2(1-\gamma)^{2^i}j_2^{2^i}}\cdot\left(\frac{\delta+1}{1 +\frac{\delta}{2T}}\right)^{2^i}.
            \end{align*}
            Recalling that $\alpha \ge t_{1s}$ we conclude that the universe has size
            \begin{align*}
                \vert U\vert &\le s_i+s_i(1/\alpha-1)=\frac{s_i}{\alpha}\le \frac{30\ln(n)}{\gamma^2(1-\gamma)^{2^i}j_2^{2^i}}\cdot\left(\frac{\delta+1}{1 +\frac{\delta}{2T}}\right)^{2^i}\cdot \frac{1}{\left(\frac{1-\gamma}{1+4\gamma}\right)^{2^i}\left(\frac{\delta}{T}+1-\delta\right)^{2^i}}\\&=\frac{30\ln(n)}{\gamma^2j_2^{2^i}}\left(\frac{\delta+1}{\left(\frac{\delta}{T}+1-\delta\right)\left(\frac{\delta}{2T}+1\right)}\right)^{2^i}\left(\frac{1+4\gamma}{(1-\gamma)^2}\right)^{2^i}.
            \end{align*}
    
    {\bf Remark.} Since $\varepsilon = \Theta(1/T)$ we get the same dependence of $\varepsilon$ on $\delta$ as that of Rubinstein, discussed in~\cite[Remark 1.4]{rubinstein2018hardness}. We note that $\varepsilon$ also depends on the unspecified function $c(\delta)$ in OVC, so we are not able to express it as a function of $\delta$, but the value is at least exponentially decreasing in $1/\delta$.
    
    }

\end{document}